\newcommand\abs[1]{\left|#1\right|}
\numberwithin{equation}{section}
\newtheorem{thm}{Theorem}[section]
\newtheorem{lem}[thm]{Lemma}
\begin{document}
\nocite{*}

\title{Infinite energy solutions to vortex equations governing the fractional quantum Hall effect}

 \author{Joseph Esposito\footnote{Email address: esposito@cims.nyu.edu}\\
Courant Institute of Mathematical Sciences\\
New York University\\
New York, NY 10012}
\maketitle
\providecommand{\keywords}[1]{{{Keywords:}} #1}
\providecommand{\MSC}[1]{{{MSC numbers:}} #1}

\begin{abstract}
In this paper, we utilize weighted Sobolev spaces to establish an existence theory for infinite energy solutions to a coupled non-linear elliptic system.  This system describes the fractional quantum Hall effect in two-dimensional double-layered systems. Via variational methods in a suitable weighted Sobolev space, we prove the existence of multiple vortices over the full plane.  These methods include constrained minimization of an action functional and the existence of a critical point, known as a saddle point, by way of a mountain pass theorem.  Furthermore, for these solutions, which are necessarily of infinite energy, we establish exponential decay estimates.
\end{abstract}
 \smallskip
 \keywords{calculus of variations, fractional quantum Hall effect, weighted Sobolev space, Chern-Simons theory, infinite energy solutions}\\\\
%
%
\MSC{35J20, 81V70, 46E35, 46N50}

\section{Introduction}
\label{introduction}
In this paper, we study the nonlinear elliptic system over $\mathbb{R}^2$,
\begin{equation}\label{vortex0}
\begin{cases}
\Delta u=4k_{11}e^u+4k_{12}e^v-4+4\pi\displaystyle\sum\limits_{j=1}^{N_1}\delta_{p_j}(x)\\
\Delta v=4k_{21}e^u+4k_{22}e^v-4+4\pi\displaystyle\sum\limits_{j=1}^{N_2}\delta_{q_j}(x)
\end{cases}
\end{equation}
where $\delta_{p_j}$ and $\delta_{q_j}$ are the Dirac distributions centered at $p_j$ and $q_j$ respectively.  We also note that the symmetric coupling matrix, $K=(k_{ij})$, is given by
\begin{equation}\label{kmat}
\nonumber K=\frac{1}{p}
\begin{pmatrix}
p+q&p-q\\p-q&p+q
\end{pmatrix},
\end{equation}
with $p,q\in \mathbb{R}$, $p>0$, $q\neq0$.  In this paper, we are interested only in a coupled system.  Therefore, we require that $p\neq q$.  

The goal of this paper is to study the coupled non-linear elliptic system (\ref{vortex0}) describing the fractional quantum Hall effect in the full space, $\mathbb{R}^2$, when the solutions are not necessarily of finite energy.  As we will see, the solutions obtained in this paper do not satisfy the finite energy conditions established in \cite{medina}, however they will indeed solve the system of differential equations.  In this scenario, the typical function space for bounded domains will not suffice.  That is, we will need to look outside of the standard $L^2$, or $W^{1,2}$, spaces for suitable solutions.  To this end, we will choose a suitable weight for $L^2$, or more appropriately $W^{1,2}$, so that it contains all solutions bounded over the full plane \cite{weighted1,weighted2,weighted3,mcowen1984equation,sy2}.  As a result, we will be able to utilize regularity and embeddings similar to those of standard Sobolev spaces \cite{sy2,mcowen1984equation}.  

We arrive at our results by first taking a series of transformations, along with a change of variables \cite{sy2}, to reduce our elliptic system into triangular form.  Then, we are able to establish a variational principle and via a constrained minimization, we establish existence of non-topological solutions.  We also provide information about the asymptotic behavior of the solutions.

The model describing the FQHE in a two-dimensional double-layered system was developed in terms of Chern-Simons theory \cite{hans,hany,lieby,lin,nam,sy1,sy2,sy3,bt,taran,wangy,yangy1,yangy2,yangy3,yang2013solitons}.  Ichinose and Sekiguci \cite{ichi2} studied the FQHE in double-layer electron systems in terms of Chern-Simons gauge theory of bosonized electrons.  In \cite{ichi2} a radially symmetric ansatz led to an effective theory for topological excitations in the $(m,m,n)$ Halperin state.  This theory was limited to single soliton configurations.  The BPS equations, which we will present in section 2, were developed by Medina \cite{medina} and an existence theory for both doubly-periodic and full-plane domains was developed. The existence theory in \cite{medina} was restricted to only positive definite coupling matrices and topological solutions which were of finite energy. 

	
The structure of this paper is as follows.  In section \ref{BPS}, we present the BPS and vortex equations developed by Ichinose, Sekiguchi, and Medina \cite{ichi2,medina}.  We also discuss the properties of the coupling matrix.  Lastly, we present the appropriate weighted Sobolev space and corresponding analytical tools developed by McOwen \cite{mcowen1984equation}.  In section \ref{weight}, we reduce the system to a triangular one via a change of variables.  In sections \ref{mink12<0} and \ref{min0<k12<1},  we establish existence of a unique solution to a constrained minimization problem when the matrix $K$ satisfies $k_{12}<0$ and $k_{12}>0$ respectively.  In section \ref{mpass}, we establish a mountain pass structure and existence of a saddle point type solutions for a modified functional. Finally, in section \ref{decay}, we study the decay estimates for solutions established in sections \ref{mink12<0}-\ref{mpass} and the discussion follows in section \ref{discuss}.

\section{BPS and Vortex Equations}\label{BPS}
Here, we provide a brief overview of the system of equations governing the FQHE.   The Lagrangian describing the FQHE in 2 dimensional electron systems was well discussed in \cite{ichi2,medina} and is given by
\begin{equation}
\nonumber\mathcal{L}=\mathcal{L}_{\phi}+\mathcal{L}_{CS},
\end{equation}
which is a sum of the matter term,
\begin{equation}
\nonumber\mathcal{L}_{\phi}=i\bar{\psi}_{\uparrow}(\partial_0-i a_0^+-ia_0^-)\psi_{\uparrow}+i\bar{\psi}_{\downarrow}(\partial_0-ia_0^++ia_0^-)\psi_{\downarrow}
-\frac{1}{2M}\sum_{\sigma=\uparrow,\downarrow}\abs{D_j^{\sigma}\psi_{\sigma}}^2-V(\psi_{\uparrow},\psi_{\downarrow}),
\end{equation}
and the Chern-Simons term,
\begin{equation}
\nonumber\mathcal{L}_{CS}=\mathcal{L}_{CS}(a_{\mu}^+)+\mathcal{L}_{CS}(a_{\mu}^-)=-\frac{1}{4}\epsilon_{\mu\nu\lambda}\left(\frac{1}{p}a_{\mu}^+\partial_{\nu}a_{\lambda}^++\frac{1}{q}a_{\mu}^-\partial_{\nu}a_{\lambda}^-\right).
\end{equation}


The bosonized fields are represented in terms of the upper layer, $\psi_{\uparrow}$, and the lower layer, $\psi_{\downarrow}$.  The mass of the electrons is given by $M$, while the parameters $p$ and $q$ are real numbers.  We also have $a_{\mu}^+$ and $a_{\mu}^-$, the scalar potential fields corresponding to $U(1)\otimes U(1)$ symmetry. In \cite{ichi2}, these scalar fields were restricted to a certain behavior which yielded necessary conditions for the total energy of the system to be finite.  In the present, we require no such constraints on the fields and allow for solutions which are of divergent energy.


The first integral of the system of BPS type was obtained by Medina in \cite{medina}.  It is given by
\begin{align}\label{bps}
(D_1^{\uparrow}&-iD_2^{\uparrow})\psi_{\uparrow}=0\\\label{d1}
(D_1^{\downarrow}&-iD_2^{\downarrow})\psi_{\downarrow}=0\\\label{d2}
B_{12}&=2(p+q)\abs{\psi_{\uparrow}}^2+2(p-q)\abs{\psi_{\downarrow}}^2-eB\\
\tilde{B}_{12}&=2(p-q)\abs{\psi_{\uparrow}}^2+2(p+q)\abs{\psi_{\downarrow}}^2-eB\\
b_0&=\frac{1}{M}(p+q)\abs{\psi_{\uparrow}}^2+\frac{1}{M}(p-q)\abs{\psi_{\downarrow}}^2+\frac{eB}{M}\\\label{d6}
\tilde{b}_0&=\frac{1}{M}(p-q)\abs{\psi_{\uparrow}}^2+\frac{1}{M}(p+q)\abs{\psi_{\downarrow}}^2+\frac{eB}{M},
\end{align}
where equations (\ref{bps}) and (\ref{d1}) are the self-dual equations we are looking for the solutions to.  There are many solutions to these self-dual equations, but this degeneracy is removed by the Chern-Simons constraints \cite{ichi2}. The ground state configurations for the fractional quantum Hall effect are represented in terms of the average electron density, $\bar{\rho}$,
\begin{align}
\nonumber\psi_{\uparrow,0}=\psi_{\downarrow,0}=\sqrt{\frac{\bar{\rho}}{2}}
\end{align}
which yields the following integral representing the total energy of the system,
\begin{equation}\label{energy}
E=\frac{1}{2M}\int\displaylimits_{\mathbb{R}^2} \Bigg[ \sum \abs{(D_1^{\sigma}-iD_2^{\sigma})\psi_{\sigma}}^2+eB\left(\abs{\psi_{\uparrow}}^2+\abs{\psi_{\downarrow}}^2\right)-eB\left(\abs{\psi_{\uparrow,0}}^2+\abs{\psi_{\downarrow,0}}^2\right)\Bigg]dx.
\end{equation}
In \cite{ichi2}, radially symmetric solutions over the full plane were established numerically.  In \cite{medina}, solutions to (\ref{bps})--(\ref{d6}) were considered over the full plane under the condition that they were of finite energy.  As a result, the following topological boundary conditions were imposed on the complex fields, $\psi_{\uparrow}$ and $\psi_{\downarrow}$,
\begin{equation}
\abs{\psi_{\uparrow}}^2\rightarrow\abs{\psi_{\uparrow,0}}^2=\frac{\bar{\rho}}{2}\qquad\text{ and }\qquad\abs{\psi_{\downarrow}}^2\rightarrow\abs{\psi_{\downarrow,0}}^2=\frac{\bar{\rho}}{2}\qquad\text{ as }\abs{x}\rightarrow\infty.
\end{equation}
In the present, we are also interested in solutions of the BPS (\ref{bps})--(\ref{d6}) over the full plane.  However, we remove this boundary condition and will see that without it, the total energy of the system (\ref{energy}) is not necessarily finite.  Therefore, we allow for solutions to the system that are of divergent energy and establish the existence of non-topological solutions.

We now seek these divergent energy solutions over the full plane.  To this end,  we identify $\mathbb{R}^2$ with the complex plane, $\mathbb{C}$, let $z$ be a point in $\mathbb{C}$ and let $z_0$ be a zero of $\psi_{\uparrow}$.  The first BPS equation tells us that in a neighborhood of $z=z_0$, with $z=x_1+ix_2$,
\begin{equation*}
\psi_{\uparrow}(z_0)=(z-z_0)^{n_0}\hat{h}_0(x_1,x_2)
\end{equation*} 
where $\hat{h}_0$ is nonzero at $z_0$ and smooth \cite{griffiths2014principles}.  We observe that the zeros of the fields $\psi_{\uparrow}$ and $\psi_{\downarrow}$ are therefore discrete and each forms a finite set represented by $Z_{\psi_{\uparrow}}=\left\{p_1,p_2,\ldots,p_{N_1}\right\}$ and  $Z_{\psi_{\downarrow}}=\left\{q_1,q_2,\ldots,q_{N_2}\right\}$ where the multiplicities of the zeros $z=p_j$ and $z=q_j$ are $n_p^j$ and $n_q^j$ respectively.  We then define
\begin{equation}
u=\ln\abs{\psi_{\uparrow}}^2-\ln\abs{\bar{\rho}}\qquad\text{ and }\qquad v=\ln\abs{\psi_{\downarrow}}^2-\ln\abs{\bar{\rho}},
\end{equation}
and pair this with the transformation $x\mapsto \sqrt{p\bar{\rho}}x$ to yield the vortex equations
\begin{equation}\label{vortex1}
\begin{cases}
\Delta u=4k_{11}e^u+4k_{12}e^v-4+4\pi\displaystyle\sum\limits_{j=1}^{N_1}n_p^j\delta_{p_j}(x)\\
\Delta v=4k_{21}e^u+4k_{22}e^v-4+4\pi\displaystyle\sum\limits_{j=1}^{N_2}n_q^j\delta_{q_j}(x)
\end{cases}
\end{equation}
where $\delta_{p_j}$ and $\delta_{q_j}$ are the Dirac distributions centered at $p_j$ and $q_j$ respectively.  In the rest of this paper, without loss of generality, we allow repetition of $p_j$'s and $q_j$'s and we will treat the $n_p^j$ and $n_q^j$ terms as having value 1. 

Here, we state the main theorem of this paper
\begin{thm}\label{mainthm}
Let $\left\{p_1,\ldots,p_{N_1},q_1,\ldots,q_{N_2}\right\}\subset\mathbb{R}^2.$  For any 
\begin{align*}
\alpha&=\alpha_0+\frac{4k_{12}}{k_{11}}N_1-4N_2>0\\
\beta&=\beta_0-\frac{4k_{12}}{k_{11}}N_1>0
\end{align*}
satisfying
\begin{equation}
0<\beta<\frac{\alpha}{4}\left(\frac{p}{q}+\frac{q}{p}-2\right)\text{ when }q>p>0,
\end{equation}
or

\begin{equation}
\alpha>0\qquad\text{and}\qquad \beta>\frac{\alpha}{4}\left(\frac{p}{q}+\frac{q}{p}-2\right)\text{ when }p>q>0, 
\end{equation}
the system given by (\ref{bps})--(\ref{d6}) has a solution $\left(\psi_{\uparrow}^{(\alpha,\beta)},\psi_{\downarrow}^{(\alpha,\beta)},b_{\mu}^{(\alpha,\beta)},\tilde{b}_{\mu}^{(\alpha,\beta)},b_{0}^{(\alpha,\beta)},\tilde{b}_{0}^{(\alpha,\beta)}\right)$ of locally square integrable functions $\psi_{\uparrow}$ and $\psi_{\downarrow}$.  
\end{thm}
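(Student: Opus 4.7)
The plan is to prove Theorem \ref{mainthm} by assembling the existence theory for the reduced vortex system (\ref{vortex1}) together with the reconstruction procedure that recovers the full BPS data. Once a solution pair $(u,v)$ to (\ref{vortex1}) is produced in a suitable function space, the complex matter fields $\psi_\uparrow,\psi_\downarrow$ are recovered from the relations $|\psi_\uparrow|^2=\bar\rho e^u$ and $|\psi_\downarrow|^2=\bar\rho e^v$ after restoring the holomorphic phases dictated by the first-order self-dual equations (\ref{bps})--(\ref{d1}), and the gauge potentials $b_\mu,\tilde b_\mu,b_0,\tilde b_0$ are then read off algebraically from the remaining relations (\ref{d2})--(\ref{d6}). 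The task therefore reduces to constructing $(u,v)$ with the flux data encoded by $\alpha$ and $\beta$.

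First I would absorb the Dirac masses into a smooth background, writing $u=u_0+U$ and $v=v_0+V$ where $u_0,v_0$ are standard regularized fundamental-solution profiles localized at the prescribed zeros $p_j,q_j$. This turns (\ref{vortex1}) into a smooth semilinear system for $(U,V)$, and the numbers $\alpha,\beta$ emerge naturally as the admissible asymptotic ``defects'' from the topological ground state, playing the role of global vortex numbers in the non-topological setting. I would then invoke the change of variables of section \ref{weight} to triangularize the system, so that a scalar variational problem controls one component while the second equation becomes a linear elliptic perturbation solvable by standard weighted Sobolev theory \cite{mcowen1984equation}.

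The variational step bifurcates according to the sign of $k_{12}=(p-q)/p$. When $q>p>0$ (so $k_{12}<0$), the associated action functional is convex and coercive on a level set parameterized by $\alpha$ and $\beta$, and direct minimization in the McOwen-type weighted Sobolev space produces a unique constrained minimizer; the admissibility inequality $0<\beta<\frac{\alpha}{4}(p/q+q/p-2)$ is precisely what is needed to keep the constraint set nonempty and to guarantee that the minimum is attained in its interior, yielding a critical point of the unconstrained functional via a Lagrange multiplier argument. When $p>q>0$ (so $k_{12}>0$), coercivity fails and minimization is replaced by a mountain-pass construction on a suitably modified functional, with the lower bound $\beta>\frac{\alpha}{4}(p/q+q/p-2)$ being exactly the threshold that separates a local minimum from a point of strictly lower energy, so that the mountain pass geometry is secured.

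The main obstacle will be verifying a Palais--Smale type compactness for the functional and controlling the exponential nonlinearities $e^u,e^v$ in the weighted setting, where the usual $W^{1,2}(\mathbb{R}^2)$ compactness tools are unavailable because the profiles have infinite Dirichlet energy. A Moser iteration coupled with a radial super-solution comparison then yields the exponential decay estimates of section \ref{decay}, which in turn ensure that $\psi_\uparrow,\psi_\downarrow$ are locally square integrable and that the reconstructed gauge fields satisfy (\ref{bps})--(\ref{d6}) pointwise, completing the proof.
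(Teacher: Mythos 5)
Your overall reduction---regularizing the Dirac masses with background functions, triangularizing the system, passing to McOwen's weighted space, and handling the $q>p>0$ case by constrained minimization plus a Lagrange-multiplier identification---matches the paper, and that half of your argument is essentially Theorem \ref{k12min}. The genuine gap is in the $p>q>0$ case. You assert that coercivity fails when $k_{12}>0$ and replace minimization by a mountain pass. In the paper the opposite happens: for $0<k_{12}<1$ the boundary term $\Lambda$ is simply bounded below by a constant (both logarithms in the formulas for $\bar{\xi}$ and $\bar{\zeta}$ now carry favorable signs, and Jensen's inequality gives lower bounds for both $\int V e^{\frac{1}{2}(\xi'-\zeta')}dx$ and $\int U e^{-\frac{k_{11}}{2k_{12}}\zeta'}dx$), so $I$ satisfies the coercive lower bound (\ref{ilower}) with coefficients independent of $\alpha$ and $\beta$, and the same constrained minimization goes through (Theorem \ref{51thm}). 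Your mountain-pass substitute cannot deliver the theorem as stated: verifying the Palais--Smale condition forces the additional restriction $\alpha<8\abs{K}/(k_{11}k_{12})=32pq/(p^2-q^2)$ (see (\ref{pq1}) and Theorem \ref{mountainthm}), whereas Theorem \ref{mainthm} claims existence for \emph{every} $\alpha>0$ with $\beta>\frac{\alpha}{4}\left(\frac{p}{q}+\frac{q}{p}-2\right)$. In the paper the mountain pass is only a supplementary construction of saddle-point solutions on that strictly smaller parameter range, not the proof of the main theorem.

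Two smaller deviations are worth flagging. First, the triangular form is not used to split the problem into a scalar variational problem plus a linear elliptic perturbation; the paper minimizes a single functional $I(\xi,\zeta)$ in both unknowns subject to the two integral constraints (\ref{constraints}), the triangular structure serving to decouple the constraints and to fix the weight $\sigma=\abs{K}/k_{12}^2$ in the multiplier computation. Second, the asymptotics are not obtained by Moser iteration or supersolution comparison: the paper shows the right-hand sides of (\ref{main}) lie in $W_{0,\delta+2}^{2}$ with zero mean, applies McOwen's mapping result (Lemma \ref{oneto}) together with the Liouville-type Lemma \ref{constlem} to conclude that $\xi$ and $\zeta$ tend to constants, and then reads off the decay of $\abs{\psi_{\uparrow}}^2,\abs{\psi_{\downarrow}}^2$ from the explicit Gaussian and power-law factors introduced by the substitutions. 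Neither of these alone would sink the argument, but the mountain-pass detour leaves the full range of $\alpha$ in the $p>q>0$ case unproven.
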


In what follows, we use the notation $\det(K)=\abs{K}=4{q}/{p}$ and note that the eigenvalues of $K$ are $\lambda=2, 2{q}/{p}$.  Since $p\neq q$, we observe that $\abs{K}\neq 4$ which splits the positive definite possibilities into two regions, $0<\abs{K}<4$ and $\abs{K}>4$.  Moreover, it is easy to see that if $p, q$ are of the same sign (both positive), then $K$ is positive definite.  Otherwise, $K$ is indefinite.   In the present study of non-topological solutions, we consider only the positive definite case. 




Here, we define a suitable weighted Sobolev space for the problem that follows and, as in \cite{sy2,mcowen1984equation}, we let $d\mu=h_0dx$ where $h_0\in C^{\infty}$ with
\begin{equation}
\nonumber h_0(x)=\abs{x}^{-\kappa},\qquad\text{ for }\abs{x}\geq 1,\kappa>4.
\end{equation}

We will use the notation $L^p(d\mu)=L^p(\mathbb{R}^2,d\mu)$, and let $\mathscr{H}$ denote the Hilbert space of $L_{\text{loc}}^2$ functions for which the following norm is finite
\begin{equation}
\|u\|_{\mathscr{H}}^2=\|\nabla u\|_{L^2(dx)}^2+\|u\|_{L^2(d\mu)}^2.
\end{equation}
We denote by $\tilde{\mathscr{H}}$ the closed subspace of $\mathscr{H}$,
\begin{equation}
 \tilde{\mathscr{H}}=\left\{u\in\mathscr{H}:\int\displaylimits_{\mathbb{R}^2}ud\mu=0\right\}.
\end{equation}
Therefore, for any $u\in\mathscr{H}$, we can decompose $u$ into
\begin{equation}
u=\bar{u}+u',\qquad \bar{u}\in\mathbb{R},\qquad u'\in\tilde{\mathscr{H}}.
\end{equation}

Before proceeding, we state some necessary lemmas, proofs of which can be found in \cite{sy2,mcowen1984equation}.  They will play an integral role in our study of the solutions to the system given by (\ref{vortex1}).
\begin{lem}\label{tm}
For any $0<\varepsilon<4\pi$, there is a $C(\varepsilon)>0$ so that
\begin{equation*}
\int\displaylimits_{\mathbb{R}^2}e^{a\abs{u}}d\mu\leq C(\varepsilon)e^{\frac{a^2}{4(4\pi-\varepsilon)}\| \nabla u\|_{L^2(dx)}^2},\qquad u\in\tilde{\mathscr{H}}
\end{equation*}
for any $a\in\mathbb{R}$.
\end{lem}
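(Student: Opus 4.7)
The first step is a reduction via Young's inequality. For any $s>0$ one has $a|u|\leq \frac{a^2u^2}{4s}+s$. Choosing $s=\frac{a^2\|\nabla u\|_{L^2(dx)}^2}{4(4\pi-\varepsilon)}$ (the case $\nabla u\equiv 0$ forces $u\equiv 0$ since $u\in\tilde{\mathscr{H}}$), exponentiating, and integrating against $d\mu$ gives
\begin{equation*}
\int_{\mathbb{R}^2} e^{a|u|}\,d\mu \;\leq\; \exp\!\left(\frac{a^2\|\nabla u\|_{L^2(dx)}^2}{4(4\pi-\varepsilon)}\right) \int_{\mathbb{R}^2}\exp\!\left(\frac{(4\pi-\varepsilon)\,u^2}{\|\nabla u\|_{L^2(dx)}^2}\right)d\mu.
\end{equation*}
Thus the lemma reduces to establishing a weighted Moser--Trudinger bound of the form $\int_{\mathbb{R}^2}\exp\!\big((4\pi-\varepsilon)u^2/\|\nabla u\|_{L^2(dx)}^2\big)\,d\mu\leq C(\varepsilon)$ uniformly over $u\in\tilde{\mathscr{H}}$.

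For this core estimate, the plan exploits two structural features: the measure $d\mu$ is finite (since $\kappa>4>2$), and the mean-zero constraint $\int u\,d\mu=0$ underwrites a weighted Poincar\'e inequality $\|u\|_{L^2(d\mu)}^2\leq C\|\nabla u\|_{L^2(dx)}^2$. The Poincar\'e inequality itself follows by a standard contradiction/compactness argument: a putative counterexample normalized to $\|\nabla u_n\|_{L^2(dx)}\to 0$ and $\|u_n\|_{L^2(d\mu)}=1$ would, by the local Rellich embedding and the rapid decay of $h_0$ at infinity, converge to a nonzero constant, contradicting the mean-zero condition. With this Poincar\'e bound in hand, I decompose $\mathbb{R}^2=B_R\cup(\mathbb{R}^2\setminus B_R)$ for a fixed $R>1$. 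On the ball $B_R$, write $u=u_{B_R}+(u-u_{B_R})$: the mean-zero piece $u-u_{B_R}$ obeys the classical Moser--Trudinger inequality on a bounded planar domain, while the average $u_{B_R}^2$ is controlled by the weighted Poincar\'e bound applied to $u$ itself. On the exterior $\mathbb{R}^2\setminus B_R$, expand the exponential via its Taylor series and use the weighted Sobolev embedding $\tilde{\mathscr{H}}\hookrightarrow L^p(d\mu)$ (valid for every finite $p$ thanks to $\kappa>4$) together with the rapid decay of $h_0$ to sum the resulting series.

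The main obstacle is preserving the sharp Moser constant $4\pi$ in the weighted, non-compact setting; the coefficient $\frac{1}{4(4\pi-\varepsilon)}$ in the conclusion is sharp up to $\varepsilon$ and cannot be obtained by naive interpolation. The classical proof already relies delicately on Schwarz symmetrization and precise radial asymptotics, and to transfer this to the weighted unbounded setting one typically either introduces an explicit conformal/stereographic change of variables that compactifies $\mathbb{R}^2$ so the classical inequality on $S^2$ directly applies, or follows McOwen's route of replacing rearrangement by weighted level-set inequalities tailored to $d\mu$. A non-sharp version of the inequality is considerably easier but would degrade the constant in the conclusion and hence not suffice for the variational arguments of the subsequent sections, where the precise constant $4\pi$ controls the coercivity of the action functionals.
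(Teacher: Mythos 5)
The paper does not actually prove Lemma \ref{tm}: it states it together with Lemmas \ref{poin} and \ref{embed} and explicitly refers the reader to \cite{sy2,mcowen1984equation} for the proofs. Your opening reduction is correct and is the standard way this form of the inequality (linear exponent, arbitrary $a$) is derived from the quadratic-exponent form: the pointwise bound $a\abs{u}\leq \frac{a^2u^2}{4s}+s$ with $s=\frac{a^2\|\nabla u\|_{L^2(dx)}^2}{4(4\pi-\varepsilon)}$ produces exactly the prefactor $\exp\left(\frac{a^2}{4(4\pi-\varepsilon)}\|\nabla u\|_{L^2(dx)}^2\right)$ and reduces everything to the uniform bound $\int_{\mathbb{R}^2}\exp\left((4\pi-\varepsilon)u^2/\|\nabla u\|_{L^2(dx)}^2\right)d\mu\leq C(\varepsilon)$ on $\tilde{\mathscr{H}}$, which is precisely McOwen's weighted Moser--Trudinger inequality. (The degenerate cases $a=0$ and $\nabla u\equiv 0$ are handled correctly: a constant with zero $d\mu$-mean vanishes since $\mu$ is a finite positive measure.) So your proposal correctly isolates the crux and, in its reduction step, supplies more detail than the paper itself does.

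The part that remains a plan rather than a proof is the core estimate, and you are candid about this. Two cautions on the sketch. First, on the exterior region, the Taylor-series argument needs quantitative control of the form $\|u\|_{L^{2k}(d\mu)}\leq C\sqrt{k}\,\|\nabla u\|_{L^2(dx)}$; this does not follow from the mere qualitative embedding $\tilde{\mathscr{H}}\hookrightarrow L^p(d\mu)$, because $u\in\mathscr{H}$ may grow (logarithmically) at infinity and so is not in $L^{2k}(dx)$, meaning the unweighted Gagliardo--Nirenberg inequality cannot be invoked directly --- you would need its weighted analogue with the stated growth of constants. Second, the interior/exterior split only yields the constant $4\pi-\varepsilon$ if the classical Moser--Trudinger inequality you apply on $B_R$ is the mean-zero (Neumann-type) version rather than the zero-boundary-value version, and the cross terms from $u_{B_R}$ must be absorbed with another Young-type splitting that eats part of the $\varepsilon$. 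Since the target constant is strictly subcritical, both obstacles are surmountable, so the plan is viable; but a complete argument would either carry out these weighted estimates or, as the paper does, simply cite McOwen's result, in which case your reduction step is all that genuinely needs to be written down.
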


\begin{lem}\label{poin}
The Poincar\'e inequality holds on $\mathscr{H}$.  In other words, there is a constant $C>0$ such that
\begin{equation*}
\|u\|_{L^2(d\mu)}^2\leq C\| \nabla u\|_{L^2(dx)}^2,\qquad u\in\tilde{\mathscr{H}}.
\end{equation*}
\end{lem}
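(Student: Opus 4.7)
The plan is a contradiction argument resting on the compactness of the embedding $\tilde{\mathscr{H}}\hookrightarrow L^{2}(d\mu)$. Suppose the inequality fails. Then there is a sequence $\{u_{n}\}\subset\tilde{\mathscr{H}}$ with $\|u_{n}\|_{L^{2}(d\mu)}=1$ and $\|\nabla u_{n}\|_{L^{2}(dx)}\to 0$. In particular $\{u_{n}\}$ is bounded in $\mathscr{H}$, so after passing to a subsequence I may assume $u_{n}\rightharpoonup u$ weakly in $\mathscr{H}$. Since $\tilde{\mathscr{H}}$ is weakly closed (it is the kernel of a continuous linear functional), the limit $u$ lies in $\tilde{\mathscr{H}}$ as well.

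The heart of the matter is promoting this to strong convergence $u_{n}\to u$ in $L^{2}(d\mu)$. I would split $\mathbb{R}^{2}$ into a large ball $B_{R}$ and its complement. On $B_{R}$ the weight $h_{0}$ is smooth and bounded above and below away from zero, so the restrictions of $\{u_{n}\}$ are bounded in $H^{1}(B_{R})$; classical Rellich--Kondrachov then gives strong convergence in $L^{2}(B_{R},d\mu)$, and a diagonal argument over $R=1,2,\dots$ produces a subsequence convergent in $L^{2}_{\mathrm{loc}}(d\mu)$. For the tail I would decompose $\mathbb{R}^{2}\setminus B_{R}$ into dyadic annuli $A_{k}=\{2^{k}\le|x|<2^{k+1}\}$, apply the standard Poincar\'e inequality on each $A_{k}$ against the annular mean $\bar{u}_{n,k}$, and control the growth of the sequence $\{\bar{u}_{n,k}\}_{k}$ by telescoping increments $|\bar{u}_{n,k+1}-\bar{u}_{n,k}|\le C\|\nabla u_{n}\|_{L^{2}(A_{k}\cup A_{k+1})}$, anchored by the global zero $d\mu$-average of $u_{n}$. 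Since $\int_{A_{k}}h_{0}\,dx\lesssim 2^{-(\kappa-2)k}$ while the Poincar\'e constant on $A_{k}$ grows only like $4^{k}$, the hypothesis $\kappa>4$ yields a convergent geometric series, producing a tail bound uniform in $n$ that tends to zero as $R\to\infty$.

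With strong $L^{2}(d\mu)$-convergence in hand, the contradiction is immediate. Weak lower semicontinuity of $\|\nabla\cdot\|_{L^{2}(dx)}$ forces $\nabla u\equiv 0$, so $u$ is almost everywhere a constant; combined with $u\in\tilde{\mathscr{H}}$ this gives $u\equiv 0$. But then $1=\|u_{n}\|_{L^{2}(d\mu)}\to\|u\|_{L^{2}(d\mu)}=0$, the desired contradiction.

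The main obstacle is the uniform tail estimate underpinning the compact embedding. The delicate point is that sequences bounded in $\mathscr{H}$ need not be bounded in the unweighted $L^{2}(\mathbb{R}^{2})$, so one cannot just factor out $\|h_{0}\|_{L^{\infty}(\{|x|>R\})}$ and be done. The remedy is the annular decomposition above, together with a careful use of the global zero-mean condition to tame the annular averages; the rapid weight decay afforded by $\kappa>4$ is precisely what makes the geometric series summable and therefore what makes the compactness, and with it the Poincar\'e inequality, possible.
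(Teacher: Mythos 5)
The paper never proves this lemma itself --- it is imported, together with Lemmas \ref{tm} and \ref{embed}, from \cite{sy2,mcowen1984equation} --- so there is no internal proof to compare against; your argument is correct and essentially self-contained. The contradiction scheme (normalize $\|u_n\|_{L^2(d\mu)}=1$ with $\|\nabla u_n\|_{L^2(dx)}\to 0$, extract a weak limit, upgrade to strong $L^2(d\mu)$ convergence, and conclude the limit is a constant of zero $d\mu$-mean, hence zero since $\int d\mu$ is finite and positive) is the standard route, and if you are allowed to quote the compact embedding of Lemma \ref{embed} (also merely cited in the paper), your first and last paragraphs already constitute a complete proof; your annular analysis is then a bonus proof of the embedding itself, with no circularity since it only invokes the classical unweighted Poincar\'e inequality on annuli. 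The one imprecision is the exponent bookkeeping in the tail estimate: the product of the Poincar\'e constant $4^k$ with $\int_{A_k}h_0\,dx\lesssim 2^{-(\kappa-2)k}$ is not the combination that actually occurs. The fluctuation term pairs the unweighted bound $4^k\|\nabla u_n\|_{L^2(A_k,dx)}^2$ with $\sup_{A_k}h_0\sim 2^{-\kappa k}$, giving $2^{(2-\kappa)k}$, while the mean term pairs $\abs{\bar u_{n,k}}^2\lesssim (1+k)^2\|u_n\|_{\mathscr{H}}^2$ (from your telescoping, anchored on $A_0=\{1\le\abs{x}<2\}$ where $h_0$ is bounded below, which is a cleaner anchor than the global zero mean) with $\int_{A_k}h_0\,dx\sim 2^{-(\kappa-2)k}$; both series already converge for $\kappa>2$, so $\kappa>4$ is sufficient but is not the threshold your final sentence suggests (in this paper $\kappa>4$ is needed elsewhere, e.g.\ to place $\abs{x}^2$ in $L^1(d\mu)$). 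This is a presentational slip, not a gap.
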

Finally, we have
\begin{lem}\label{embed}
The injection $\mathscr{H}\rightarrow L^2(d\mu)$ is a compact embedding.
\end{lem}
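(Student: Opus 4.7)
The plan is to show that any bounded sequence $\{u_n\}\subset\mathscr{H}$ admits a subsequence convergent in $L^2(d\mu)$, via the standard two-step compactness argument: local Rellich--Kondrachov on each ball, followed by a uniform control of the tails that relies on the finiteness of the measure $\mu$ together with the Moser--Trudinger inequality of Lemma \ref{tm}.

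First, for each fixed $R>0$, I would observe that on $B_R=\{|x|\leq R\}$ the weight $h_0$ is smooth and bounded above and below by positive constants, so the norms of $L^2(B_R,d\mu)$ and $L^2(B_R,dx)$ are equivalent. Since $\|\nabla u_n\|_{L^2(dx)}$ and $\|u_n\|_{L^2(d\mu)}$ are uniformly bounded, the restrictions $u_n|_{B_R}$ form a bounded sequence in $H^1(B_R)$. The classical Rellich--Kondrachov theorem extracts a subsequence converging in $L^2(B_R)$, and hence in $L^2(B_R,d\mu)$. A diagonal extraction over $R=1,2,3,\dots$ then produces a single subsequence (still denoted $u_n$) convergent in $L^2(B_R,d\mu)$ for every integer $R$.

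The crux of the argument is a uniform tail bound $\sup_n\int_{|x|>R}|u_n|^2\,d\mu\to 0$ as $R\to\infty$. Decompose $u_n=\bar{u}_n+u_n'$ with $\bar{u}_n\in\mathbb{R}$ and $u_n'\in\tilde{\mathscr{H}}$. Because $\kappa>4$ forces $\mu(\mathbb{R}^2)<\infty$, the orthogonality $\int u_n'\,d\mu=0$ gives $\bar{u}_n^2\,\mu(\mathbb{R}^2)+\|u_n'\|_{L^2(d\mu)}^2=\|u_n\|_{L^2(d\mu)}^2$, so $|\bar{u}_n|$ is bounded and its contribution to the tail is at most $C\,\mu(\{|x|>R\})$, which vanishes uniformly. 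For the oscillatory part, Lemma \ref{tm} applied with $a=1$ yields a uniform bound on $\int e^{|u_n'|}\,d\mu$; combined with the elementary inequality $t^4\leq 24\,e^t$ for $t\geq 0$, this furnishes a uniform bound on $\|u_n'\|_{L^4(d\mu)}$. Hölder's inequality then gives
\begin{equation*}
\int\displaylimits_{|x|>R}|u_n'|^2\,d\mu\leq \|u_n'\|_{L^4(d\mu)}^2\,\mu(\{|x|>R\})^{1/2},
\end{equation*}
which tends to zero uniformly in $n$.

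Combining the two ingredients, given $\varepsilon>0$ I choose $R$ so large that the tail is below $\varepsilon/2$ for every $n$; for $n,m$ large along the diagonal subsequence, the $L^2(B_R,d\mu)$ Cauchy property yields $\int_{B_R}|u_n-u_m|^2\,d\mu<\varepsilon/2$, and the subsequence is Cauchy in the complete space $L^2(d\mu)$. I expect the main obstacle to be precisely the uniform tail estimate: a naive Poincaré or local Rellich argument is insufficient, and one must marry the finiteness of $\mu$ (ensured by $\kappa>4$) with the exponential integrability encoded in Lemma \ref{tm} to upgrade an $L^2(d\mu)$ bound into a uniform $L^4(d\mu)$ bound that Hölder can exploit.
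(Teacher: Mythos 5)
Your argument is correct and self-contained. For the record, the paper does not prove this lemma at all --- it is stated as one of several facts whose ``proofs \ldots can be found in \cite{sy2,mcowen1984equation}'' --- so there is no in-text proof to compare against. Your reconstruction follows the standard two-pillar scheme that those references use: local Rellich--Kondrachov on balls (where $h_0$ is bounded above and below, so $L^2(d\mu)$ and $L^2(dx)$ are locally equivalent) plus a tail estimate exploiting $\mu(\mathbb{R}^2)<\infty$. The one place where you take a slightly different route is the tail: McOwen controls $u$ at infinity by the pointwise logarithmic growth estimate $|u(x)|\lesssim |u(R,\theta)|+\|\nabla u\|_{L^2(dx)}(\ln|x|)^{1/2}$ coming directly from the Dirichlet integral, and then uses that $(\ln|x|)\,|x|^{-\kappa}$ is integrable; you instead upgrade the $L^2(d\mu)$ bound on $u_n'$ to a uniform $L^4(d\mu)$ bound via Lemma \ref{tm} and $t^4\leq 24e^t$, and finish with H\"older against $\mu(\{|x|>R\})^{1/2}$. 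Both are legitimate; yours is arguably cleaner given that Lemma \ref{tm} is already available, and there is no circularity since the Trudinger--Moser inequality on $\tilde{\mathscr{H}}$ is established independently of the compact embedding. The only cosmetic point is that the constant's contribution to the tail should carry the factor $2$ from $|u_n|^2\leq 2\bar{u}_n^2+2|u_n'|^2$, which of course changes nothing.
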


Now we are ready to modify the system of equations (\ref{vortex1}) accordingly.  In the next section, we first transform the system into triangular form and then establish a variational principle.
\section{System in view of weighted Sobolev spaces}\label{weight}

Here, we consider the two dimensional system (\ref{vortex1}) over $\mathbb{R}^2$ and rewrite our system so that we can establish a variational principle.  To deal with the delta functions, we must introduce a cutoff function.  We follow the procedure of Yang in \cite{yang2013solitons}.  To this end, we define $\rho(t)$ to be a smooth increasing function over $t>0$ so that
\begin{equation}\label{rhot}
\rho(t)=\begin{cases}\ln t,& t \leq \frac{1}{2},\\ 
0,& t \geq 1,\\ 
\leq 0,& \text{ for all }t>0.  
\end{cases}
\end{equation}
We now cover the points $p_j$ with disjoint open balls. Let $\delta>0$ be so that
\begin{equation}
B_{\delta_{p_j}}=\left\{x \Big|\abs{x-p_j}<\delta\right\},\qquad j=1,2,\ldots,N_1,
\end{equation}
and consider the functions
\begin{equation}
\phi_j(x)=2\rho\left(\frac{\abs{x-p_j}}{\delta}\right),\qquad j=1,2,\ldots,N_1.
\end{equation}
We also consider the functions with compact support given by
\begin{equation}
\phi_{0,j}=\Delta \phi_j-4\pi\delta_{p_j}(x).
\end{equation}
It is clear that
\begin{equation}
\int\displaylimits_{\mathbb{R}^2}\phi_{0,j}dx=4\pi,
\end{equation}
with
\begin{align}
\phi_j(x)\leq 0\qquad\text{ and }\qquad
\phi_j(x)=0\qquad\text{ when }\qquad\abs{x-p_j}\geq\delta.
\end{align}
We now define a subtractive background function, $\displaystyle u_0=\sum_{j=1}^{N_1}\phi_j$, and observe that
\begin{equation}\label{subg}
\Delta u_0=4\pi\sum_{j=1}^{N_1}\delta_{p_j}-g_1.
\end{equation}
The function $g_1$ in (\ref{subg}) satisfies
\begin{equation}
g_1=\sum_{j=1}^{N_1}\phi_{0,j}\qquad\text{and}\qquad\int\displaylimits_{\mathbb{R}^2}g_1dx=4\pi N_1.
\end{equation}
We now have constructed a function, $u_0$, that satisfies
\begin{equation}
u_0\leq 0\qquad\text{and}\qquad u_0=0,\text{ on }\mathbb{R}^2\setminus\bigcup_{j=1}^{N_1}B_{\delta}(p_j).
\end{equation}
Similarly, we find a cover of $\left\{q_1,q_2,\ldots,q_{N_2}\right\}$ and  define a background function $v_0$ along with a function $g_2$ so that
\begin{equation}
\Delta v_0=4\pi\sum_{j=1}^{N_2}\delta_{q_j}-g_2\qquad\text{and}\qquad
\int\displaylimits_{\mathbb{R}^2}g_2=4\pi N_2.
\end{equation}
We now introduce the substitutions
\begin{align}
u=u_0+u_1-\abs{x}^2\qquad\text{and}\qquad
v=v_0+v_1-\abs{x}^2.
\end{align}
We are then able to obtain a system in terms of $u_1$ and $v_1$,
\begin{equation}\label{sysuv1}
\begin{cases}
\Delta u_1=e^{-\abs{x}^2}\left(k_{11}U_0e^{u_1}+k_{12}V_0e^{v_1}\right)+g_1\\
\Delta v_1=e^{-\abs{x}^2}\left(k_{12}U_0e^{u_1}+k_{11}V_0e^{v_1}\right)+g_2
\end{cases}
\end{equation}
where $U_0=e^{u_0}$, $V_0=e^{v_0}$ and we have used the symmetry of the coupling matrix $K$ to write $k_{12}=k_{21}$ and $k_{11}=k_{22}$. In order to establish a variational principle, we rewrite the system in triangular form.  To this end, we consider the substitutions
\begin{equation}
\label{sub}
u_2=-\frac{2k_{12}}{k_{11}}u_1+2v_1\qquad\text{and}\qquad
v_2=-\frac{2k_{12}}{k_{11}}u_1.
\end{equation}
In view of (\ref{sub}), we are able to rewrite the system given by (\ref{sysuv1}) as
\begin{equation}
\begin{cases}
\Delta u_2=\frac{2\abs{K}}{k_{11}}e^{-\abs{x}^2} V_0e^{\frac{1}{2}(u_2-v_2)}-\frac{2k_{12}}{k_{11}}g_1+2g_2\\
\Delta v_2=-2k_{12}e^{-\abs{x}^2} U_0 e^{-\frac{k_{11}}{2k_{12}}v_2 }-2\frac{k_{12}^2}{k_{11}}e^{-\abs{x}^2} V_0 e^{\frac{1}{2}(u_2-v_2)}-\frac{2k_{12}}{k_{11}}g_1.
\end{cases}
\end{equation}
We now introduce functions $u_3,v_3\in C^{\infty}(\mathbb{R}^2)$ so that
\begin{align}
\begin{split}\label{u3v3}
u_3&=\alpha_0\ln r,\qquad r\geq 1, \alpha_0 >0\\
v_3&=-\beta_0\ln r,\qquad r\geq 1,\beta_0 >0,
\end{split}
\end{align}
and set $f=-\Delta u_3-\frac{2k_{12}}{k_{11}}g_1+2g_2, h=-\Delta v_3-\frac{2k_{12}}{k_{11}}g_1$.  Since $g_1$ and $g_2$ are functions we defined to have compact support,  we can see that $f$ and $h$ are also functions with compact support. The function $f$ satisfies
\begin{align}
\nonumber\int\displaylimits_{\mathbb{R}^2}fdx&=-\int\displaylimits_{\abs{x}\leq 1} \Delta u_3dx-\int\displaylimits_{\mathbb{R}^2}\left(\frac{2k_{12}}{k_{11}}g_1-2g_2 \right)dx\\
&\nonumber=-\int\displaylimits_{\abs{x}=1}\frac{\partial u_3}{\partial r}ds-\frac{2k_{12}}{k_{11}}4\pi N_1+8\pi N_2\\
&\nonumber=-2\pi\alpha_0-\frac{2k_{12}}{k_{11}}4\pi N_1+8\pi N_2\\
&=-2\pi\alpha
\end{align}
where 
\begin{equation}\label{alpha0}
\alpha=\alpha_0+\frac{4k_{12}}{k_{11}}N_1-4N_2>0.
\end{equation}
Similarly, for $h$ we obtain
\begin{equation}
\int\displaylimits_{\mathbb{R}^2}hdx=2\pi\beta,
\end{equation}
where
\begin{equation}\label{beta0}
\beta=\beta_0-\frac{4k_{12}}{k_{11}}N_1>0.
\end{equation}
In order for (\ref{u3v3}) to remain valid, we see the requirements of $\alpha_0$ and $\beta_0$ are 
\begin{equation*}
\alpha_0>-\frac{4k_{12}}{k_{11}}N_1+4N_2,\qquad\text{ and }\qquad \beta_0>\frac{4k_{12}}{k_{11}}N_1.
\end{equation*}
Now, we make the translations 
\begin{equation}
\nonumber u_2=u_3+\xi\qquad\text{and}\qquad v_2=v_3+\zeta,
\end{equation}
to obtain the following triangular system in terms of $\xi$ and $\zeta$,
\begin{equation}\label{main}
\begin{cases}
\Delta \xi=\frac{2\abs{K}}{k_{11}}Ve^{\frac{1}{2}(\xi-\zeta)}+f\\
\Delta \zeta =-2k_{12}Ue^{-\frac{k_{11}}{2k_{12}}\zeta}-2\frac{k_{12}^2}{k_{11}}Ve^{\frac{1}{2}(\xi-\zeta)}+h.
\end{cases}
\end{equation}
We observe that
\begin{equation}
U=e^{-\abs{x}^2} U_0 e^{-\frac{k_{11}}{2k_{12}}v_3}\qquad\text{and}\qquad
V=e^{-\abs{x}^2} V_0 e^{\frac{1}{2}(u_3-v_3)},
\end{equation}
where both $U$ and $V=O\left(e^{-\abs{x}}\right)$ for large $\abs{x}$.  We now consider the functional
\begin{equation}\label{isig}
I(\xi,\zeta)=\int\displaylimits_{\mathbb{R}^2}\left(\frac{1}{2}\abs{\nabla\xi}^2+\frac{\sigma}{2}\abs{\nabla\zeta}^2+f\xi+\sigma h\zeta\right) dx,
\end{equation}
along with
\begin{equation}
J_{1}(\xi,\zeta)=\int\displaylimits_{\mathbb{R}^2}Ve^{\frac{1}{2}(\xi-\zeta)}dx\qquad\text{and}\qquad
J_{2}(\xi,\zeta)=\int\displaylimits_{\mathbb{R}^2}Ue^{-\frac{k_{11}}{2k_{12}}\zeta}dx,
\end{equation}
where $\sigma$ is a real valued constant whose value we will determine in the work that follows.  Prior to determining the value of $\sigma$, we first follow the ideas of Spruck and Yang in \cite{sy1} by relating the first equation in the system (\ref{main}) to McOwen's study of conformal deformation equations in  \cite{mcowen1984equation} and consider a suitable weighted Sobolev space for the functions $\xi$ and $\zeta$.   The true motivation for such a choice is that our solutions, $\xi$ and $\zeta$, may (and actually will) approach constants as $\abs{x}\rightarrow\infty$.  This results in 
\begin{equation*}
\|\xi\|_{L^2(dx)}\rightarrow \infty\qquad\text{and}\qquad\|\zeta\|_{L^2(dx)}\rightarrow\infty\qquad\text{as}\qquad\abs{x}\rightarrow\infty.
\end{equation*}
Therefore, the energy norm, $\|\cdot\|_{W^{1,2}(\mathbb{R}^2)}^2$ is infinite.  The introduction of the power weight for the $L^2(dx)$ norm allows for such behavior while keeping the energy norm finite.

We will now seek to minimize the functional (\ref{isig}) with respect to the following constraints which come from the integration of (\ref{main}).
\begin{align}\label{constraints}
\begin{split}
\frac{2\abs{K}}{k_{11}}&\int\displaylimits_{\mathbb{R}^2}Ve^{\frac{1}{2}(\xi-\zeta)}dx=2\pi\alpha\\
2k_{12}&\int\displaylimits_{\mathbb{R}^2}Ue^{-\frac{k_{11}}{2k_{12}}\zeta}dx+2\frac{k_{12}^2}{k_{11}}\int\displaylimits_{\mathbb{R}^2}Ve^{\frac{1}{2}(\xi-\zeta)}dx=2\pi\beta.
\end{split}
\end{align}
From both equations in (\ref{constraints}), we obtain
\begin{equation}\label{constraints1}
2k_{12}\int\displaylimits_{\mathbb{R}^2}Ue^{-\frac{k_{11}}{2k_{12}}\zeta}dx=2\pi\left(\beta-\frac{\alpha k_{12}^2}{\abs{K}}\right).
\end{equation}

We need to ensure that the constraints are valid.  To this end, we recall that when $K$ is positive definite, $k_{12}<1$ with $k_{12}\neq 0$ and see that it will be necessary to consider two possible intervals for $k_{12}$ and the conditions imposed on $\alpha$ and $\beta$,
\begin{equation*}
\text{when } 0<k_{12}<1,\qquad \beta>\frac{\alpha k_{12}^2}{\abs{K}}
\end{equation*}
and
\begin{equation*}
\text{when }k_{12}<0, \qquad0<\beta<\frac{\alpha k_{12}^2}{\abs{K}}.
\end{equation*}

In the existence theory that that follows, we will discuss these cases separately.  However, for both of these cases, we will consider the following minimization problem
\begin{equation}\label{min}
\min\left\{I(\xi,\zeta)|\xi,\zeta\in\mathscr{H},(\xi,\zeta) \text{ satisfies the constraints (\ref{constraints})}\right\}.
\end{equation}
We now state the main result of this section, which will use the Lagrange multiplier rule to determine the value of $\sigma$ in (\ref{isig}).
\begin{lem}
If $\sigma=\frac{\abs{K}}{k_{12}^2}$, then a solution $(\xi,\zeta)$ of (\ref{min}) is a solution of (\ref{main}).
\end{lem}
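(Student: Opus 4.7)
The plan is to invoke the Lagrange multiplier rule for the constrained problem (\ref{min}). Assuming the Lagrange multiplier theorem applies (which requires that the two constraint functionals in (\ref{constraints}) have linearly independent Fréchet derivatives at the minimizer $(\xi,\zeta)$ in the space $\mathscr{H}\times\mathscr{H}$), there exist real numbers $\lambda_1,\lambda_2$ such that for every test pair $(\eta_1,\eta_2)\in \mathscr{H}\times\mathscr{H}$,
\begin{equation*}
\langle DI(\xi,\zeta),(\eta_1,\eta_2)\rangle=\lambda_1\langle DG_1(\xi,\zeta),(\eta_1,\eta_2)\rangle+\lambda_2\langle DG_2(\xi,\zeta),(\eta_1,\eta_2)\rangle,
\end{equation*}
where $G_1,G_2$ denote the left-hand sides of (\ref{constraints}) minus their right-hand sides. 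The verification of linear independence uses the different exponential dependencies on $\xi$ and $\zeta$, which prevents the gradients of $G_1$ and $G_2$ from being parallel as elements of $\mathscr{H}^\ast\times \mathscr{H}^\ast$.

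Next I would compute the variational derivatives explicitly. Varying $I$ in $\eta_1$ (the $\xi$-slot) gives $\int(\nabla\xi\cdot\nabla\eta_1+f\eta_1)\,dx$, while varying in $\eta_2$ gives $\int(\sigma\nabla\zeta\cdot\nabla\eta_2+\sigma h\eta_2)\,dx$. Varying $G_1$ produces $\tfrac{|K|}{k_{11}}\int Ve^{(\xi-\zeta)/2}\eta_1\,dx$ in the $\xi$-slot and the negative of the same expression in the $\zeta$-slot; varying $G_2$ produces $\tfrac{k_{12}^2}{k_{11}}\int Ve^{(\xi-\zeta)/2}\eta_1\,dx$ in the $\xi$-slot and $-k_{11}\int U e^{-k_{11}\zeta/(2k_{12})}\eta_2\,dx-\tfrac{k_{12}^2}{k_{11}}\int Ve^{(\xi-\zeta)/2}\eta_2\,dx$ in the $\zeta$-slot. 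Arbitrariness of $\eta_1,\eta_2\in\mathscr{H}$ then converts these identities into the two Euler--Lagrange equations (in the distributional sense), which after rearrangement read
\begin{align*}
\Delta\xi&=-\Bigl(\lambda_1\tfrac{|K|}{k_{11}}+\lambda_2\tfrac{k_{12}^2}{k_{11}}\Bigr)Ve^{(\xi-\zeta)/2}+f,\\
\Delta\zeta&=-\tfrac{\lambda_2 k_{11}}{\sigma}\,Ue^{-k_{11}\zeta/(2k_{12})}+\tfrac{1}{\sigma}\Bigl(\lambda_1\tfrac{|K|}{k_{11}}+\lambda_2\tfrac{k_{12}^2}{k_{11}}\Bigr)Ve^{(\xi-\zeta)/2}+h.
\end{align*}

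Finally, I would match these with (\ref{main}) coefficient by coefficient. Equating the $\xi$-equation with the first line of (\ref{main}) yields $\lambda_1|K|+\lambda_2 k_{12}^2=-2|K|$. Equating the $U$-coefficient in the $\zeta$-equation with $-2k_{12}$ yields $\lambda_2=-\tfrac{2k_{12}\sigma}{k_{11}}$, and equating the $V$-coefficient yields $\lambda_1|K|+\lambda_2 k_{12}^2=-2\sigma k_{12}^2$. Comparing the two identical left-hand sides forces $-2|K|=-2\sigma k_{12}^2$, hence $\sigma=|K|/k_{12}^2$, which is exactly the value claimed in the lemma. With this choice, the multipliers $\lambda_1,\lambda_2$ are determined uniquely, and the Euler--Lagrange system becomes (\ref{main}), so $(\xi,\zeta)$ solves (\ref{main}) as desired.

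The main obstacle I expect is the justification of the Lagrange multiplier rule in this weighted Sobolev setting, namely the linear independence of $DG_1(\xi,\zeta)$ and $DG_2(\xi,\zeta)$; this is where one uses that $V e^{(\xi-\zeta)/2}$ and $U e^{-k_{11}\zeta/(2k_{12})}$ are not proportional as $\mathscr{H}^\ast$-elements because they have distinct exponential profiles in $(\xi,\zeta)$, and that the weights $U,V$ are strictly positive and decay like $e^{-|x|}$ so that the pairings against suitable test functions in $\mathscr{H}$ are nonzero. The remaining steps are essentially algebraic matching and cause no analytic difficulty beyond the standard duality between weak Euler--Lagrange identities in $\mathscr{H}$ and distributional PDEs on $\mathbb{R}^2$.
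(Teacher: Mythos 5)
Your proposal is correct and follows essentially the same route as the paper: invoke the Lagrange multiplier rule for the minimizer of (\ref{min}), compute the weak Euler--Lagrange identities against test functions in $\mathscr{H}$, and match coefficients with (\ref{main}) (the paper pins the multipliers down by testing against $\chi\equiv 1$ and using the constraints (\ref{constraints})--(\ref{constraints1}), which is the same bookkeeping in a slightly different order, and it does not even address the applicability of the multiplier rule that you rightly flag). The only blemish is a sign slip in your displayed $\Delta\zeta$ equation --- the $U$-coefficient should be $+\lambda_2 k_{11}/\sigma$ --- but your subsequent deduction $\lambda_2=-2k_{12}\sigma/k_{11}$ already uses the correct sign, and the determination of $\sigma=\abs{K}/k_{12}^2$ comes from the $V$-coefficients alone, so nothing propagates.
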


\begin{proof}
Let $(\xi,\zeta)$ be a solution of (\ref{min}).  The Lagrange multiplier rule states that there are constants $\lambda_1,\lambda_2\in\mathbb{R}$ such that

\begin{equation}\label{chi1}
\int\displaylimits_{\mathbb{R}^2}\nabla\xi\cdot\nabla\chi_1+f\chi_1dx=\lambda_1\frac{\abs{K}}{k_{11}}\int\displaylimits_{\mathbb{R}^2}Ve^{\frac{1}{2}(\xi-\zeta)}\chi_1dx-\lambda_2\frac{k_{12}^2}{k_{11}}\int\displaylimits_{\mathbb{R}^2}Ve^{\frac{1}{2}(\xi-\zeta)}\chi_1dx, \quad\forall\chi_1\in\mathscr{H},
\end{equation}
and
\begin{multline}\label{chi2}
\int\displaylimits_{\mathbb{R}^2}\sigma\nabla\zeta\cdot\nabla\chi_2+\sigma h\chi_2dx=-\lambda_1\frac{\abs{K}}{k_{11}}\int\displaylimits_{\mathbb{R}^2}Ve^{\frac{1}{2}(\xi-\zeta)}\chi_2dx\\-
\lambda_2k_{11}\int\displaylimits_{\mathbb{R}^2}Ue^{-\frac{k_{11}}{2k_{12}}\zeta}\chi_2dx+\lambda_2\frac{k_{12}^2}{k_{11}}\int\displaylimits_{\mathbb{R}^2}Ve^{\frac{1}{2}(\xi-\zeta)}\chi_2dx,\qquad\forall\chi_2\in\mathscr{H}.
\end{multline}
We combine terms and use the substitutions
\begin{equation*}
\lambda=\lambda_1-\lambda_2\frac{k_{12}^2}{\abs{K}}\qquad
\mu=-\lambda_2\frac{k_{11}}{2k_{12}}.
\end{equation*}
We proceed in terms of $\lambda$ and $\mu$ to obtain
\begin{equation}
\int\displaylimits_{\mathbb{R}^2}\nabla\xi\cdot\nabla\chi_1+f\chi_1dx=\lambda\frac{\abs{K}}{k_{11}}\int\displaylimits_{\mathbb{R}^2}Ve^{\frac{1}{2}(\xi-\zeta)}\chi_1dx,\label{lagrange1}
\end{equation}
and
\begin{equation}
\int\displaylimits_{\mathbb{R}^2}\left(\sigma\nabla\zeta\cdot\nabla\chi_2+\sigma h\chi_2\right)dx=-\lambda\frac{\abs{K}}{k_{11}}\int\displaylimits_{\mathbb{R}^2}Ve^{\frac{1}{2}(\xi-\zeta)}\chi_2dx-2\mu k_{12}\int\displaylimits_{\mathbb{R}^2}Ue^{-\frac{k_{11}}{2k_{12}}\zeta}\chi_2dx\label{lagrange2}.
\end{equation}
Since $\chi_1$ and $\chi_2$ are arbitrary test functions in $\mathscr{H}$, we set $\chi_1\equiv 1$ in (\ref{lagrange1}) and obtain
\begin{equation}
-2\pi\alpha=\frac{1}{2}\lambda(2\pi\alpha).
\end{equation}
We see that by setting $\lambda=-2$, we are able to recover the first equation in (\ref{main}).  Similarly, we set $\chi_2\equiv 1$ in (\ref{lagrange2}) to obtain
\begin{equation}\label{mu}
2\pi\beta\sigma=2\pi\alpha+\mu\left[2\pi\left(\beta-\frac{\alpha k_{12}^2}{\abs{K}} 
\right)\right],
\end{equation} 
and we obtain the second equation in (\ref{main}) by setting $\mu={\abs{K}}/{k_{12}^2}$.  Moreover, in order for this to be true, it is necessary that 
\begin{equation*}
\sigma=\mu=\frac{\abs{K}}{k_{12}^2}.
\end{equation*}

\end{proof}
From this moment on, we fix $\sigma={\abs{K}}/{k_{12}^2}$ and show that the minimization problem given by (\ref{min}), with this value of $\sigma$, has a solution.  In order to do so, we will need to use the decomposition of $\mathscr{H}=\tilde{\mathscr{H}}\oplus \mathbb{R}$ to write,
\begin{equation*}
\xi=\bar{\xi}+\xi',\qquad \zeta=\bar{\zeta}+\zeta',
\end{equation*}
where $\bar{\xi},\bar{\zeta}\in\mathbb{R}$ are constants and $\xi',\zeta'\in\mathcal{\tilde{H}}$. We now are able to rewrite the functional in the form

\begin{equation}\label{func}
I(\xi,\zeta)=\int\displaylimits_{\mathbb{R}^2}\left(\frac{1}{2}\abs{\nabla\xi'}^2+\frac{\abs{K}}{2k_{12}^2}\abs{\nabla\zeta'}^2\right)dx
+\int\displaylimits_{\mathbb{R}^2}\left(f\xi'+\frac{\abs{K}}{k_{12}^2}h\zeta'\right)dx-2\pi\alpha\bar{\xi}+2\pi\beta\frac{\abs{K}}{k_{12}^2}\bar{\zeta}.
\end{equation}
We now proceed to establish existence of a minimizer to (\ref{min}).  In the next section, we consider the case when $k_{12}<0$.

\section{Constrained minimization with $k_{12}<0$}\label{mink12<0}
In this section, we consider the case when $K$ is positive definite with $k_{12}<0$ and prove that the minimization problem (\ref{min}) has a solution.  
The main theorem of this section is given below.

\begin{thm}\label{k12min}
The constrained minimization problem (\ref{min}) has a solution when
\begin{equation*}
0<\beta<\frac{\alpha}{4}\left(\frac{p}{q}+\frac{q}{p}-2\right),\qquad \alpha>0.
\end{equation*}
\end{thm}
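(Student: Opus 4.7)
The plan is to apply the direct method of the calculus of variations to a reduced functional obtained by using the two constraints to eliminate the constant parts $\bar{\xi},\bar{\zeta}$. With the decomposition $\xi=\bar\xi+\xi'$, $\zeta=\bar\zeta+\zeta'$, $\xi',\zeta'\in\tilde{\mathscr{H}}$, the constraints (\ref{constraints}) and (\ref{constraints1}) take the form
\begin{align*}
e^{(\bar\xi-\bar\zeta)/2}\!\int_{\mathbb{R}^2}\! Ve^{(\xi'-\zeta')/2}\,dx &= \frac{\pi\alpha k_{11}}{\abs{K}},\\
e^{-k_{11}\bar\zeta/(2k_{12})}\!\int_{\mathbb{R}^2}\! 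Ue^{-k_{11}\zeta'/(2k_{12})}\,dx &= \frac{\pi(\beta-\alpha k_{12}^2/\abs{K})}{k_{12}}.
\end{align*}
Under $k_{12}<0$ together with $\beta<\alpha k_{12}^2/\abs{K}$, both right-hand sides are strictly positive (negative divided by negative on the second line), so the logarithms give explicit formulas for $\bar\xi-\bar\zeta$ and $\bar\zeta$ as functionals of $\xi',\zeta'$. Substituting into (\ref{func}) yields a reduced functional
\begin{equation*}
\tilde I(\xi',\zeta') = \tfrac12\|\nabla\xi'\|_{L^2(dx)}^2 + \tfrac{\abs{K}}{2k_{12}^2}\|\nabla\zeta'\|_{L^2(dx)}^2 + \!\int_{\mathbb{R}^2}\!\bigl(f\xi'+\tfrac{\abs{K}}{k_{12}^2}h\zeta'\bigr)dx + 4\pi\alpha\ln\!\int\! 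Ve^{(\xi'-\zeta')/2}dx + A\ln\!\int\! Ue^{-k_{11}\zeta'/(2k_{12})}dx + C_0,
\end{equation*}
where $A=\frac{4\pi k_{12}}{k_{11}}\bigl(\frac{\beta\abs{K}}{k_{12}^2}-\alpha\bigr)$ and $C_0$ is a constant. A direct sign check using $k_{12}<0$, $k_{11}>0$, and $\beta<\alpha k_{12}^2/\abs{K}$ shows $A>0$, matching the positive coefficient $4\pi\alpha$ of the first logarithm.

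Coercivity of $\tilde I$ on $\tilde{\mathscr H}\times\tilde{\mathscr H}$ now follows because both log-term weights $4\pi\alpha$ and $A$ are positive: Jensen's inequality applied to the probability densities $V/\!\int\!V$ and $U/\!\int\!U$ provides lower bounds of each logarithm that are linear in $\xi',\zeta'$, and the compact support of $f,h$ together with the Poincar\'e inequality (Lemma \ref{poin}) and $\varepsilon$-Cauchy--Schwarz let all such linear terms be absorbed into the gradient squares, yielding $\tilde I(\xi',\zeta') \ge \tfrac14\|\nabla\xi'\|_{L^2(dx)}^2 + \tfrac{\abs{K}}{4k_{12}^2}\|\nabla\zeta'\|_{L^2(dx)}^2 - C$.

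Given coercivity, a minimizer is produced by the standard direct method. For a minimizing sequence $(\xi_n',\zeta_n')$, the uniform $\mathscr H$-bound yields weak subsequential limits $(\xi',\zeta')\in\tilde{\mathscr H}\times\tilde{\mathscr H}$ that are strong in $L^2(d\mu)$ (hence a.e.\ along a subsequence) by the compact embedding of Lemma \ref{embed}. The gradient terms are weakly lower semicontinuous, while Lemma \ref{tm} provides uniform $L^p(d\mu)$-bounds on $e^{(\xi_n'-\zeta_n')/2}$ and $e^{-k_{11}\zeta_n'/(2k_{12})}$; since $V,U=O(e^{-\abs{x}})$ decay much faster than $h_0(x)=\abs{x}^{-\kappa}$ at infinity, this integrability transfers to $V\,dx$ and $U\,dx$, and Vitali's theorem then lets the exponential integrals pass to the limit so the two log-terms are continuous under the weak convergence. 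Thus $(\xi',\zeta')$ minimizes $\tilde I$, and recovering $\bar\xi,\bar\zeta$ from the constraint relations delivers a solution of (\ref{min}).

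The main obstacle is the sign bookkeeping that secures $A>0$: this positivity requires both $k_{12}<0$ and $\beta<\alpha k_{12}^2/\abs{K}$ simultaneously, since failure of either hypothesis would flip the log-term coefficient to negative and force an upper-bound argument invoking the critical mass in Lemma \ref{tm}, demanding a fundamentally different variational strategy---which is precisely why the case $0<k_{12}<1$ is deferred to Section \ref{min0<k12<1}.
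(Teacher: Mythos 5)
Your proof is correct and follows the same overall strategy as the paper's: eliminate $\bar\xi,\bar\zeta$ through the constraints, bound the resulting logarithmic terms, establish coercivity and weak lower semicontinuity on $\tilde{\mathscr{H}}\times\tilde{\mathscr{H}}$, and conclude by the direct method (the paper verifies passage to the limit in the constraints via the mean value theorem, H\"older's inequality, Lemma \ref{tm} and Lemma \ref{embed} rather than Vitali's theorem, but that is cosmetic). The one substantive divergence is your treatment of the $U$-logarithm. You observe that its net coefficient in the reduced functional is $A=\frac{4\pi k_{12}}{k_{11}}\bigl(\frac{\beta\abs{K}}{k_{12}^2}-\alpha\bigr)>0$ (negative times negative under the stated hypotheses), so a Jensen lower bound on $\int_{\mathbb{R}^2} Ue^{-k_{11}\zeta'/(2k_{12})}dx$ suffices and $\Lambda$ is bounded below by a constant outright. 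The paper instead reads off only the negative coefficient $2k_{12}/k_{11}$ of that logarithm inside the formula for $\bar\zeta$, argues that an upper bound (via Lemma \ref{tm}) is needed, and arrives at the estimate (\ref{lamlow}) carrying a $\|\nabla\zeta'\|_{L^2(dx)}^2$ correction whose sign must then be controlled by the condition (\ref{beta1}); but since $\bar\zeta$ enters $\Lambda$ with the negative multiplier $\frac{2\pi\abs{K}}{k_{12}^2}\bigl(\beta-\frac{\alpha k_{12}^2}{\abs{K}}\bigr)$, the net coefficient of that logarithm is in fact positive, and your Jensen lower bound is the direction actually required --- your version is the sign-consistent one and dispenses with the quadratic correction term entirely. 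One small inaccuracy in your closing remark: in the complementary case $0<k_{12}<1$, $\beta>\alpha k_{12}^2/\abs{K}$ of Section \ref{min0<k12<1}, \emph{both} factors of $A$ flip sign, so $A$ remains positive and a Jensen lower bound is again what the paper uses there; the real role of the paired sign hypotheses is to make the constraint (\ref{constraints1}) solvable at all, not to force a different variational strategy.
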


\textbf{Remark:} The above inequality is valid.  To see this we note that when $k_{12}<0$, both $q$ and $p$ are positive.  Furthermore, $0<p<q$, and the last factor on the right hand side satisfies
\begin{equation*}
\frac{p}{q}+\frac{q}{p}-2>0.
\end{equation*}  
We now prove Theorem {\ref{k12min}}
\begin{proof}
To show that the minimization problem has a solution, we will need to show that the functional $I(\xi,\zeta)$ given by (\ref{func}) is bounded below, coercive, and weakly lower semicontinuous. In order to find a minimum for the functional given by (\ref{func}),  we will first find a lower bound for the last two terms.  These are given by
\begin{equation}\label{lam}
\Lambda=-2\pi\alpha\bar{\xi}+2\pi\beta\frac{\abs{K}}{k_{12}^2}\bar{\zeta}.
\end{equation} 
From the constraints given by (\ref{constraints}) and (\ref{constraints1}), we obtain the following expressions for $\bar{\xi}$ and $\bar{\zeta}$,
\begin{equation}
\bar{\xi}=\bar{\zeta}+2\ln\left(\frac{k_{11}\pi\alpha}{\abs{K}}\right)-2\ln\left[\int\displaylimits_{\mathbb{R}^2}Ve^{\frac{1}{2}(\xi'-\zeta')}dx \right],
\delimitershortfall 0pt
\end{equation}
and
\begin{equation}\label{UU}
\bar{\zeta}=\frac{2 k_{12}}{k_{11}} \ln\left( \frac{\pi\beta}{k_{12}} -\frac{\pi\alpha k_{12}}{\abs{K}}\right)+\frac{2k_{12}}{k_{11}}\ln\left[\int\displaylimits_{\mathbb{R}^2}Ue^{-\frac{k_{11}}{2k_{12}}\zeta'} dx\right].
\delimitershortfall 0pt
\end{equation}
It is useful to rewrite (\ref{lam}) in the following form,
\begin{equation}\label{lam2}
\Lambda=-2\pi\alpha\bar{\xi}+2\pi\beta\frac{\abs{K}}{k_{12}^2}\bar{\zeta}=\frac{2\pi \abs{K}}{k_{12}^2}\left(\beta-\frac{\alpha k_{12}^2}{\abs{k}}\right)\bar{\zeta}+4\pi\alpha\ln\left[\int\displaylimits_{\mathbb{R}^2}Ve^{\frac{1}{2}(\xi'-\zeta')}dx\right].
\delimitershortfall 0pt
\end{equation}
Since $4\pi\alpha>0$, we must find a lower bound for the last term in (\ref{lam2}). To this end, we recall that
\begin{equation}
 \nonumber V=e^{-\abs{x}^2} V_0 e^{\frac{1}{2}(u_3-v_3)}=e^{-\abs{x}^2}e^{\frac{1}{2}(u_3-v_3)}\prod_{j=1}^{N_2}\abs{x-q_j}^2
 =e^{v_0+\frac{1}{2}(u_3-v_3)-\abs{x}^2}
\end{equation}
and see that  $v_0, u_3,v_3$ and $\abs{x}^2$ all belong to $L(d\mu)$. This, along with Jensen's inequality, allows us to obtain
\begin{align}\label{lam1}
\int\displaylimits_{\mathbb{R}^2}Ve^{\frac{1}{2}(\xi'-\zeta')}dx&=\int\displaylimits_{\mathbb{R}^2}h_0^{-1}Ve^{\frac{1}{2}(\xi'-\zeta')}d\mu\nonumber\\
&\geq \varepsilon_0\int\displaylimits_{\mathbb{R}^2}Ve^{\frac{1}{2}(\xi'-\zeta')}d\mu\nonumber\\
&\geq C_1\exp\left[\int\displaylimits_{\mathbb{R}^2}\left(v_0+\frac{1}{2}(u_3-v_3)-\abs{x}^2\right)d\mu \delimitershortfall 0pt\middle/ \int\displaylimits_{\mathbb{R}^2}d\mu\right].
\delimitershortfall 0pt
\end{align}
where $\int\displaylimits_{\mathbb{R}^2}d\mu$ is finite.  Since the coefficient of 
$\ln\left[\int\displaylimits_{\mathbb{R}^2}Ue^{-\frac{k_{11}}{2k_{12}}\zeta'} dx\right]$ is negative in (\ref{UU}), we must find an upper bound for
\begin{equation}\label{Uterm}
\int\displaylimits_{\mathbb{R}^2}Ue^{-\frac{k_{11}}{2k_{12}}\zeta'}dx.
\end{equation}
To this end, we use the inequality
\begin{equation}
h_0^{-1}U=\abs{x}^{-\kappa}e^{-\abs{x}^2}e^{u_0-\frac{k_{11}}{2k_{12}}v_3}\leq C_2,
\end{equation}
and we are now able to obtain the following estimate for (\ref{Uterm}),
\begin{equation}\label{lam2}
\nonumber\int\displaylimits_{\mathbb{R}^2}Ue^{-\frac{k_{11}}{2k_{12}}\zeta'}dx=\int\displaylimits_{\mathbb{R}^2}h_0^{-1}Ue^{-\frac{k_{11}}{2k_{12}}\zeta'}d\mu
\leq C_3\int\displaylimits_{\mathbb{R}^2}e^{-\frac{k_{11}}{2k_{12}}\zeta'}d\mu
\leq C_4(\varepsilon)\exp\left[\frac{k_{11}^2}{16k_{12}^2(4\pi-\varepsilon)}\|\nabla \zeta'\|_{L^2(dx)}^2\right].
\end{equation}
From (\ref{lam1}) and  (\ref{lam2}), we are able to obtain the following estimate for (\ref{lam}),
\begin{equation}\label{lamlow}
\Lambda=-2\pi\alpha\bar{\xi}+2\pi\beta\frac{\abs{K}}{k_{12}^2}\bar{\zeta}\geq C_5+\frac{\pi k_{11}}{4k_{12}(4\pi-\varepsilon)}\left(\beta\frac{\abs{K}}{k_{12}^2}-\alpha\right)\|\nabla \zeta'\|_{L^2(dx)}^2.
\end{equation}
We now find an estimate for the middle term in (\ref{func}).  We recall again that $f$ and $h$ are functions with compact support.  In the following, we use Young's inequality and the Poincar\'e inequality to obtain,
\begin{equation}\label{fmax}
\int\displaylimits_{\mathbb{R}^2}\abs{f\xi'}dx=\int\displaylimits_{\mathbb{R}^2}\abs{\frac{f}{\sqrt{2\varepsilon h_0}}\sqrt{2\varepsilon h_0}\xi'}dx
\leq\int\displaylimits_{\mathbb{R}^2}\frac{1}{4\varepsilon h_0}\abs{f}^2+\varepsilon h_0\abs{\xi'}^2dx
\leq \varepsilon^{-1} C_6 +\varepsilon C\| \nabla \xi'\|_{L^2(dx)}^2.
\end{equation}
Similarly, for the $\zeta '$ term we obtain,
\begin{equation}\label{hmax}
\int\displaylimits_{\mathbb{R}^2}\abs{h\zeta'}dx\leq\varepsilon^{-1}C_7+\varepsilon C\|\nabla\zeta'\|_{L^2(dx)}^2. 
\end{equation}
Therefore, in view of (\ref{lamlow}), (\ref{fmax}) and (\ref{hmax}), we have the following lower bound for $I(\xi,\zeta)$,
\begin{align}\label{lower}
\nonumber I(\xi,\zeta)&\geq \left(\frac{1}{2}-\varepsilon C\right)\|\nabla\xi'\|_{L^2(dx)}^2+\frac{\abs{K}}{2k_{12}^2}\left(1+\frac{\pi k_{11}}{2k_{12}(4\pi-\varepsilon)}\left(\beta-\frac{\alpha k_{12}^2}{\abs{K}} \right)-\varepsilon C'\right)\|\nabla\zeta'\|_{L^2(dx)}^2+C_8\\
&\equiv \delta_1\|\nabla\xi'\|_{L^2(dx)}^2+\delta_2\|\nabla\zeta'\|_{L^2(dx)}+C_8.
\end{align}
Since we know that $k_{12}<0$, and the constraints (\ref{constraints}) and (\ref{constraints1}) require $0<\beta<\alpha k_{12}^2/\abs{K}$, we see that 
\begin{equation}\label{beta1}
1+\frac{\pi k_{11}}{2k_{12}(4\pi-\varepsilon)}\left(\beta-\frac{\alpha k_{12}^2}{\abs{K}} \right)>0.
\end{equation}
Therefore, we can choose any $0<\varepsilon<4\pi$  so that $\delta_1,\delta_2>0$.  From this choice of $\varepsilon$, we see that
\begin{equation}\label{coercive}
I(\xi,\zeta)\rightarrow\infty\qquad\text{ when }\qquad\|\nabla \xi'\|_{L^2(dx)},\|\nabla \zeta'\|_{L^2(dx)}\rightarrow\infty
\end{equation}
which establishes that $I(\xi,\zeta)$ is coercive. We have thus established that $I(\xi,\zeta)$ is bounded below by (\ref{lower}) and coercive by (\ref{coercive}) on the admissible set
\begin{equation}
\mathcal{A}=\left\{\xi,\zeta\in\mathscr{H}| \xi,\zeta\text{ satisfy }(\ref{constraints})\right\}.
\end{equation}

We now pick a minimizing sequence of (\ref{min}), denoted by $\left\{(\xi_j,\zeta_j)\right\}_{j=1}^{\infty}$.  We note that since this is a minimizing sequence, and we have established coercivity, $I(\xi_j,\zeta_j)$ is bounded.  Therefore, both $\|\nabla \xi_j'\|_{L^2(dx)}^2$ and $\| \nabla \zeta_j'\|_{L^2(dx)}^2$ are bounded.  Consequently, the sequence $\left\{(\xi'_j,\zeta'_j)\right\}_{j=1}^{\infty}$ is bounded in $\tilde{\mathscr{H}}$.  With this, and the bounds for $\bar{\xi}$ and $\bar{\zeta}$, we can see that $\left\{\bar{\xi_j}\right\}_{j=1}^{\infty}$ and $\left\{\bar{\zeta_j}\right\}_{j=1}^{\infty}$ are bounded sequences in $\mathbb{R}$.  Without loss of generality, passing to a subsequence if necessary, we assume that there exist $\xi,\zeta\in\mathscr{H}$ so that
\begin{equation}
\xi_j\rightharpoonup\xi\qquad\text{and}\qquad\zeta_j\rightharpoonup\zeta, \text{ weakly in }\mathscr{H}.
\end{equation}
Now we show that the pair $(\xi,\zeta)$ satisfies the constraints (\ref{constraints}) and (\ref{constraints1}).  To this end, we use the mean value theorem, and H\"older's inequality to obtain
\begin{align}
\nonumber&\abs{\int\displaylimits_{\,\mathbb{R}^2}Ue^{-\frac{k_{11}}{2k_{12}}\zeta}dx-\int\displaylimits_{\mathbb{R}^2}Ue^{-\frac{k_{11}}{2k_{12}}\zeta_j}dx}\\
\nonumber&\leq C' \int\limits _{\mathbb{R}^2}h_0^{\frac{1}{2}}e^{-\frac{k_{11}}{2k_{12}}\abs{\zeta}}h_0^{\frac{1}{2}}e^{-\frac{k_{11}}{2k_{12}}\abs{\zeta_j}}\abs{\zeta-\zeta_j}dx\\ 
\nonumber&\leq C'\left(\int\displaylimits_{\,\,\mathbb{R}^2}e^{-\frac{k_{11}}{k_{12}}\abs{\zeta}}d\mu\right)^{\frac{1}{2}}\left(\int\displaylimits_{\,\,\mathbb{R}^2}e^{-\frac{k_{11}}{k_{12}}\abs{\zeta_j}}\abs{\zeta-\zeta_j}^2d\mu\right)^{\frac{1}{2}}\\
&\label{aa1}\leq C''(\varepsilon)e^{\frac{k_{11}^2}{8k_{12}^2(4\pi-\varepsilon)}\left(\|\nabla\zeta\|_{L^2(dx)}^2+\|\nabla\zeta_j \|_{L^2(dx)}^2 \right)}\|\zeta-\zeta_j\|_{L^2(d\mu)}.
\end{align}
The inequality (\ref{aa1}) comes from Lemma \ref{tm} with $a=-{k_{11}}/{k_{12}}$.  By Lemma \ref{embed}, we see that 
\begin{equation}
\|\zeta-\zeta_j\|_{L^2(d\mu)}\rightarrow 0\qquad \text{as}\qquad j\rightarrow\infty,
\end{equation}
and therefore,
\begin{equation}
\abs{\int\displaylimits_{\,\,\mathbb{R}^2}Ue^{-\frac{k_{11}}{2k_{12}}\zeta}dx-\int\displaylimits_{\mathbb{R}^2}Ue^{-\frac{k_{11}}{2k_{12}}\zeta_j}dx}\rightarrow 0\qquad\text{ as }j\rightarrow \infty.
\end{equation}
Similarly, we use the mean value theorem and H\"older's inequality to obtain,
\begin{equation}
\abs{\int\displaylimits_{\,\,\mathbb{R}^2}Ve^{\frac{1}{2}(\xi-\zeta)}dx-\int\displaylimits_{\mathbb{R}^2}Ve^{\frac{1}{2}(\xi_j-\zeta_j)}dx}\rightarrow 0\qquad\text{as}\qquad j\rightarrow\infty.
\end{equation}
Therefore, the pair $(\xi,\zeta)$ satisfies the constraints (\ref{constraints}). 
Furthermore, the last two terms of the functional, given by (\ref{func}), are continuous. To see this, we use $(\cdot,\cdot)$ to denote the $L^2$ inner product over $\mathbb{R}^2$ and also use the Cauchy-Schwarz inequality to get, 
\begin{equation}
\abs{(f,\xi)-(f,\xi_j)}=\abs{(f,\xi-\xi_j)}\leq \|f\|_{L^2(dx)}\|\xi-\xi_j\|_{L^2(dx)}.
\end{equation}
The same follows for the $h\zeta$ term. Since $f$ and $h$ are of compact support and continuous, they are bounded in $L^2$.  We see that the right hand side vanishes as $j\rightarrow\infty$.
Therefore, $I(\xi,\zeta)$ is weakly lower semicontinuous.  In other words,
\begin{equation}
I(\xi,\zeta)\leq \liminf_{j\rightarrow\infty}I(\xi_j,\zeta_j),
\end{equation}
and the pair $(\xi,\zeta)$ solves (\ref{main}).
We have thus established existence of a solution to the constrained minimization problem (\ref{min}) when $k_{12}<0$ under the condition given in (\ref{beta1}), $0<\beta<{\alpha k_{12}^2}/{\abs{K}}$ or in terms of $p$ and $q$,
\begin{equation}
0<\beta<\frac{\alpha}{4}\left(\frac{p}{q}+\frac{q}{p}-2\right).
\end{equation}
The first part of Theorem \ref{mainthm} has been established.
\end{proof}

\section{Constrained minimization with $0<k_{12}<1$}\label{min0<k12<1}

Here, we consider the case when $K$ is positive definite with $0<k_{12}<1$, and show that the minimization problem (\ref{min}) has a solution.  The approach of this section will mirror most of the previous section except for a small portion.  We will omit the details of repetition.

We state the theorem of this section below.

\begin{thm}\label{51thm}
The constrained minimization problem (\ref{min}) has a solution when $1<k_{12}<0$ and
\begin{equation*}
\beta>\frac{\alpha}{4}\left(\frac{p}{q}+\frac{q}{p}-2\right),\qquad\alpha>0.
\end{equation*}
\end{thm}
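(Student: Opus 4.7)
The plan is to mimic the argument of Theorem~\ref{k12min}; as the author notes, virtually every step --- coercivity of $I$ on the admissible set, boundedness and extraction of a weak limit for a minimizing sequence, passage to the limit in the constraints via Lemma~\ref{tm} and Lemma~\ref{embed}, weak lower semicontinuity, and the Lagrange multiplier identification of the limit as a solution of (\ref{main}) --- carries over without modification. The only part that needs genuinely new work is the lower bound for
\begin{equation*}
\Lambda = \frac{2\pi|K|}{k_{12}^2}\left(\beta - \frac{\alpha k_{12}^2}{|K|}\right)\bar\zeta + 4\pi\alpha\ln\int_{\mathbb{R}^2}V e^{(\xi'-\zeta')/2}\,dx,
\end{equation*}
because the signs driving the previous analysis are reversed.

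Under the current hypothesis $\beta > \alpha k_{12}^2/|K|$ with $k_{11},k_{12}>0$, the coefficient of $\bar\zeta$ in $\Lambda$ is positive, and the coefficient $2k_{12}/k_{11}$ of $\ln\int U e^{-k_{11}\zeta'/(2k_{12})}\,dx$ in (\ref{UU}) is also positive. Thus both logarithms enter $\Lambda$ with positive coefficients and one only needs a lower bound for each integral. I would obtain these by Jensen's inequality applied to $e^{x}$ on the finite measure space $(\mathbb{R}^2, d\mu)$, exactly in the spirit of (\ref{lam1}). Using $V = e^{v_0 + (u_3-v_3)/2 - |x|^2}$, $U = e^{u_0 - k_{11}v_3/(2k_{12}) - |x|^2}$, together with $\int \xi'\,d\mu = \int \zeta'\,d\mu = 0$ and the $L^1(d\mu)$-integrability of the background terms (guaranteed by $\kappa>4$), Jensen yields
\begin{equation*}
\int_{\mathbb{R}^2} V e^{(\xi'-\zeta')/2}\,dx \geq C_1 > 0, \qquad \int_{\mathbb{R}^2} U e^{-k_{11}\zeta'/(2k_{12})}\,dx \geq C_2 > 0,
\end{equation*}
uniformly in $\xi', \zeta' \in \tilde{\mathscr{H}}$. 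Substituting, $\Lambda \geq C_0$ for an absolute constant.

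Combining this with the Young plus Poincar\'e bounds (\ref{fmax})--(\ref{hmax}) on the $f\xi'$ and $h\zeta'$ terms gives
\begin{equation*}
I(\xi,\zeta) \geq \delta_1\|\nabla\xi'\|_{L^2(dx)}^2 + \delta_2\|\nabla\zeta'\|_{L^2(dx)}^2 + C
\end{equation*}
with $\delta_1, \delta_2 > 0$ for all sufficiently small $\varepsilon$; in contrast to Section~\ref{mink12<0}, here $\delta_2$ arises purely from $|K|/(2k_{12}^2)$ minus an $\varepsilon$-small error, so no delicate cancellation between $\bar\zeta$ and $\|\nabla\zeta'\|^2$ is needed. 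Coercivity then implies that a minimizing sequence is bounded in $\mathscr{H}$, and the remainder of the argument (weak convergence, preservation of (\ref{constraints}) via Lemma~\ref{embed} and the estimate (\ref{aa1}), and weak lower semicontinuity of $I$) proceeds verbatim as in Section~\ref{mink12<0}.

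The main obstacle is really just sign bookkeeping in (\ref{constraints1}) and (\ref{UU}): one must verify that the logarithmic arguments remain positive and that each logarithm enters $\Lambda$ with the correct sign so that Jensen-type lower bounds become the natural tool. Once this is recognized, the present case is technically \emph{simpler} than $k_{12}<0$, since Lemma~\ref{tm} plays no role in the coercivity estimate and the Jensen bounds alone suffice.
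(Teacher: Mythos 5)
Your proposal is correct and follows essentially the same route as the paper: the paper likewise observes that the only step requiring modification is the bound on $\Lambda$, and since $2k_{12}/k_{11}>0$ and $\beta-\alpha k_{12}^2/|K|>0$ both logarithms now need lower bounds, which it obtains exactly as you do via Jensen's inequality on the finite measure $d\mu$ applied to $U e^{-k_{11}\zeta'/(2k_{12})}$ (the $V$-integral bound carrying over from (\ref{lam1})), yielding $\Lambda\geq C$ and the coercivity estimate (\ref{ilower}) with coefficients independent of $\alpha,\beta$. Your remark that Lemma~\ref{tm} is no longer needed for coercivity and that this case is technically simpler matches the paper's treatment.
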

\begin{proof}
The approach mirrors that in the previous section until we get to the estimates for $\Lambda$ given by (\ref{lam}).  The only part that needs to be modified is how we establish a lower bound for
\begin{equation}
\bar{\zeta}=\frac{2 k_{12}}{k_{11}} \ln\left( \frac{\pi\beta}{k_{12}} -\frac{\pi\alpha k_{12}}{\abs{K}}\right)+\frac{2k_{12}}{k_{11}}\ln\left[\int\displaylimits_{\,\,\mathbb{R}^2}Ue^{-\frac{k_{11}}{2k_{12}}\zeta'} dx\right].
\end{equation}
Since ${2k_{12}}/{k_{11}}>0$, rather than finding an upper estimate, we need to find a lower estimate for
\begin{equation}
\int\displaylimits_{\mathbb{R}^2}Ue^{-\frac{k_{11}}{2k_{12}}\zeta'}dx.
\end{equation}
We recall that 
\begin{equation}
U=e^{u_0-\abs{x}^2-\frac{k_{11}}{2k_{12}}v_3},
\end{equation}
and $u_0, v_3,$ and $\abs{x}^2$ all belong to $L(d\mu)$.  We use the fact that $\zeta'\in\tilde{\mathscr{H}}$, along with Jensen's inequality to get
\begin{align}
\nonumber\int\displaylimits_{\mathbb{R}^2}Ue^{-\frac{k_{11}}{2k_{12}}\zeta'}dx &\geq\varepsilon_0\int\displaylimits_{\mathbb{R}^2}\left(Ue^{-\frac{k_{11}}{2k_{12}}\zeta'}\right)d\mu\\
\nonumber&=\varepsilon_0 \int\displaylimits_{\mathbb{R}^2}\left(e^{u_0-\frac{k_{11}}{2k_{12}}v_3-\abs{x}^2-\frac{k_{11}}{2k_{12}}\zeta'}\right)d\mu\\
&\geq C_1\exp\left[\int\displaylimits_{\,\,\mathbb{R}^2}\left(u_0-\frac{k_{11}}{2k_{12}}v_3-\abs{x}^2\right)d\mu\middle/\int\displaylimits_{\mathbb{R}^2}d\mu  \right].
\end{align}
We have now obtained a new lower bound for $\Lambda$,
\begin{equation}
\Lambda\geq C_2.
\end{equation}
Therefore, we are able to write the lower bound of $I(\xi,\zeta)$ as
\begin{equation}\label{ilower}
I(\xi,\zeta)\geq\left(\frac{1}{2}-\varepsilon C\right)\left(\| \nabla\xi'\|_{L^2(dx)}^2+\frac{\abs{K}}{k_{12}^2}\| \nabla\zeta'\|_{L^2(dx)}^2 \right)+C_3.
\end{equation}

We note here that the coefficients of $\|\nabla\xi'\|$ and $\|\nabla\zeta'\|$ in (\ref{ilower}) are positive.  Furthermore, the inequality is independent of $\alpha$ and $\beta$.  We now choose $\alpha$ and $\beta$ such that the constraints of the system when $0<k_{12}<1$ given in (\ref{constraints}) are satisfied, then we may choose $\varepsilon>0$ small enough so that $\frac{1}{2}-\varepsilon C>0$ and $I(\xi,\zeta)$ is coercive.  In the process, we have also shown that $I(\xi,\zeta)$ is bounded below.  

Therefore, we may now prove that the minimization problem has a solution. However, the rest of this proof is carried out in the same manner as the rest of Theorem \ref{k12min}.  The end result is that we established existence of a solution to the constrained minimization problem (\ref{min}) when $k_{12}>0$ under the condition, $\beta>{\alpha k_{12}^2}/{\abs{K}}$ or in terms of $p$ and $q$,
\begin{equation}
\beta>\frac{\alpha}{4}\left(\frac{p}{q}+\frac{q}{p}-2\right).
\end{equation}
This completes the proof of Theorem \ref{mainthm} 
\end{proof}

\section{Mountain pass for $k_{11},k_{12}>0$}\label{mpass}
In this section, we provide an alternative approach for establishing existence of solutions.  Here, we establish the existence of a saddle point type solution to the system via a mountain pass theorem.  As we will see, the result will provide a smaller range for $\alpha$ and $\beta$.  Before proceeding, we consider a modified version of the functional, $I(\xi,\zeta)$ used in the previous sections,
\begin{multline}\label{mpfun}
E(\xi,\zeta)=\int\displaylimits_{\mathbb{R}^2}\frac{1}{2}\abs{\nabla\xi}^2+\frac{\abs{K}}{2k_{12}^2}\abs{\nabla\zeta}^2+\frac{4\abs{K}}{k_{11}}Ve^{\frac{1}{2}(\xi-\zeta)}\\+\frac{4\abs{K}}{k_{11}}Ue^{-\frac{k_{11}}{2k_{12}}\zeta}+f\xi+\frac{\abs{K}}{k_{12}^2}h\zeta-\frac{4\abs{K}}{k_{11}}(V+U) dx.
\end{multline}
 We see that $E(0,0)=0$ and $E(c,c)\rightarrow -\infty$ as $c\rightarrow\infty$.
 We first establish that the functional (\ref{mpfun}) satisfies a certain compactness condition.  This lemma is stated below.
\begin{lem}
The functional given by (\ref{mpfun}), satisfies the Palais-Smale (PS) compactness condition.  That is, for every sequence $\left\{(\xi_n,\zeta_n)\right\}_{n=1}^{\infty}\subset\mathscr{H}\times\mathscr{H}$ such that $E(\xi_n,\zeta_n)\rightarrow M$ and $E'(\xi_n,\zeta_n)\rightarrow 0$ as $n\rightarrow\infty$, there exists a strongly convergent subsequence.
\end{lem}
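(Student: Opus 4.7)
The plan is to verify the Palais--Smale condition by the standard three-step procedure: first establish that any PS sequence $\{(\xi_n,\zeta_n)\}$ is bounded in $\mathscr{H}\times\mathscr{H}$, then extract a weakly convergent subsequence using the compact embedding of Lemma \ref{embed}, and finally upgrade weak convergence to strong convergence via the quadratic structure of $E$.

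For boundedness, the idea is to test $E'(\xi_n,\zeta_n)\to 0$ against constant pairs. Pairing with $(1,0)\in\mathscr{H}\times\mathscr{H}$ yields
\begin{equation*}
\frac{2\abs{K}}{k_{11}}\int\displaylimits_{\mathbb{R}^2}Ve^{\frac{1}{2}(\xi_n-\zeta_n)}\,dx-2\pi\alpha=o(1),
\end{equation*}
and pairing with $(0,1)$ gives an analogous relation determining $\int_{\mathbb{R}^2}Ue^{-\frac{k_{11}}{2k_{12}}\zeta_n}\,dx$ up to $o(1)$. Hence both exponential integrals remain bounded along the sequence. Decomposing $\xi_n=\bar{\xi}_n+\xi_n'$, $\zeta_n=\bar{\zeta}_n+\zeta_n'$ with $\xi_n',\zeta_n'\in\tilde{\mathscr{H}}$, these integrals factor out $e^{(\bar{\xi}_n-\bar{\zeta}_n)/2}$ and $e^{-k_{11}\bar{\zeta}_n/(2k_{12})}$ respectively; a Jensen-type lower bound on the primed pieces, just as in (\ref{lam1}), controls $\bar{\xi}_n-\bar{\zeta}_n$ and $\bar{\zeta}_n$ from the appropriate side. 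Combining this with boundedness of $E(\xi_n,\zeta_n)$, Young's inequality on the linear $f\xi_n,h\zeta_n$ terms (cf.\ (\ref{fmax})--(\ref{hmax})), and Lemma \ref{tm} applied to the remaining exponential factor, uniform control of $\|\nabla\xi_n\|_{L^2(dx)}$, $\|\nabla\zeta_n\|_{L^2(dx)}$, $\bar{\xi}_n$ and $\bar{\zeta}_n$ emerges.

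For the strong-convergence step, passing to a subsequence we may assume $\xi_n\rightharpoonup\xi$, $\zeta_n\rightharpoonup\zeta$ weakly in $\mathscr{H}$, hence strongly in $L^2(d\mu)$ by Lemma \ref{embed}. Consider the pairing
\begin{equation*}
\langle E'(\xi_n,\zeta_n)-E'(\xi,\zeta),(\xi_n-\xi,\zeta_n-\zeta)\rangle,
\end{equation*}
whose first term tends to zero since $E'(\xi_n,\zeta_n)\to 0$ in $\mathscr{H}^{*}$ and $(\xi_n-\xi,\zeta_n-\zeta)$ is bounded, while the second tends to zero by weak convergence. Expansion yields
\begin{equation*}
\|\nabla(\xi_n-\xi)\|_{L^2(dx)}^2+\frac{\abs{K}}{k_{12}^2}\|\nabla(\zeta_n-\zeta)\|_{L^2(dx)}^2
\end{equation*}
together with nonlinear terms of the form $\int V(e^{(\xi_n-\zeta_n)/2}-e^{(\xi-\zeta)/2})(\xi_n-\xi-(\zeta_n-\zeta))\,dx$ and an analogous $U$-integral; the linear $f$ and $h$ pieces cancel identically in the difference. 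Applying the mean value theorem to each exponential, then H\"older's inequality, then Lemma \ref{tm} with the uniform gradient bounds from Step 1, and finally the strong $L^2(d\mu)$ convergence, in exactly the manner of the chain of estimates culminating in (\ref{aa1}), forces these nonlinear pieces to zero. This yields norm convergence of the gradients and hence strong convergence in $\mathscr{H}\times\mathscr{H}$.

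The main obstacle lies in the boundedness step: unlike in the constrained minimization of Sections \ref{mink12<0}--\ref{min0<k12<1}, no a priori constraint on the exponential integrals is available, so the upper bounds from testing $E'$ against constants must be reconciled with the Jensen and Moser--Trudinger lower bounds to simultaneously pin down $\bar{\xi}_n, \bar{\zeta}_n$ and the gradient norms. The sign information on $k_{12}$, together with the compatibility of $\alpha$ and $\beta$ with it, must be used carefully so that every inequality points in the useful direction; this is presumably what forces the restriction on the range of $\alpha$ and $\beta$ alluded to in the opening of this section. Once the sequence is shown bounded, the remaining weak-to-strong upgrade is a direct adaptation of estimates already established.
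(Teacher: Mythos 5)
Your proposal follows essentially the same route as the paper's proof: testing $E'(\xi_n,\zeta_n)\to 0$ against the constant pairs $(1,0)$ and $(0,1)$ to control the exponential integrals, using the decomposition $\mathscr{H}=\tilde{\mathscr{H}}\oplus\mathbb{R}$ together with Jensen's inequality and Lemma \ref{tm} to bound $\bar{\xi}_n,\bar{\zeta}_n$ in terms of the gradient norms, feeding this back into $E(\xi_n,\zeta_n)\to M$ to obtain boundedness (this is indeed where the paper extracts the extra requirement $\alpha<8\abs{K}/(k_{11}k_{12})$ so that the gradient coefficients stay positive, exactly as you anticipate), and then upgrading weak to strong convergence by pairing the difference of derivatives with $(\xi_n-\xi,\zeta_n-\zeta)$ and estimating the nonlinear terms via the mean value theorem, H\"older, Lemma \ref{tm}, and the compact embedding of Lemma \ref{embed}. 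The outline is correct and matches the paper step for step.
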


\begin{proof}
First, we represent the PS conditions as,
\begin{multline}\label{ps1}
E(\xi_n,\zeta_n)=\int\displaylimits_{\mathbb{R}^2}\frac{1}{2}\abs{\nabla\xi_n}^2+\frac{\abs{K}}{2k_{12}^2}\abs{\nabla\zeta_n}^2+\frac{4\abs{K}}{k_{11}}Ve^{\frac{1}{2}(\xi-\zeta_n)}+\frac{4\abs{K}}{k_{11}}Ue^{-\frac{k_{11}}{2k_{12}}\zeta_n}\\
+f\xi+\frac{\abs{K}}{k_{12}^2}h\zeta_n-\frac{4\abs{K}}{k_{11}}(V+U)dx\rightarrow M,\qquad n\rightarrow\infty
\end{multline}
and  for any $w_1,w_2\in\mathscr{H}$,
\begin{multline}\label{ps2}
\abs{E'(\xi_n,\zeta_n)(w_1,w_2)}=
\Bigg|\int\displaylimits_{\mathbb{R}^2}\Bigg(\nabla\xi_n\cdot\nabla w_1+\frac{\abs{K}}{k_{12}^2}\nabla\zeta_n\cdot\nabla w_2+\frac{2\abs{K}}{k_{11}}Ve^{\frac{1}{2}(\xi_n-\zeta_n)}(w_1-w_2)\\-\frac{2\abs{K}}{k_{12}} Ue^{-\frac{k_{11}}{2k_{12}}\zeta_n}w_2+f w_1+\frac{\abs{K}}{k_{12}^2}hw_2\Bigg)dx\Bigg|
\leq\varepsilon_n\left(\|w_1\|_{\mathscr{H}}+\|w_2\|_{\mathscr{H}} \right).
\end{multline}
Where $E'$ represents the Fr\'echet derivative of (\ref{mpfun}).  We then choose $(w_1,w_2)=(1,0)$ and $(0,1)$ respectively in (\ref{ps2}) to obtain the following estimates,
\begin{equation}\label{w1in}
\abs{\int\displaylimits_{\,\,\mathbb{R}^2}\Bigg(\frac{2\abs{K}}{k_{11}}Ve^{\frac{1}{2}(\xi_n-\zeta_n)}+f\Bigg)dx}\leq\varepsilon_n\left(\int\displaylimits_{\,\,\mathbb{R}^2} d\mu\right)^{\frac{1}{2}},
\end{equation}
and
\begin{equation}\label{w2in}
\abs{\int\displaylimits_{\,\,\mathbb{R}^2}\Bigg(\frac{-2\abs{K}}{k_{11}}Ve^{\frac{1}{2}(\xi_n-\zeta_n)}-\frac{2\abs{K}}{k_{12}}Ue^{-\frac{k_{11}}{2k_{12}}\zeta_n}+\frac{\abs{K}}{k_{12}^2}h\Bigg)dx}\leq\varepsilon_n\left(\int\displaylimits_{\,\,\mathbb{R}^2} d\mu\right)^{\frac{1}{2}}.
\end{equation}
Again, we use the decomposition of $\xi=\bar{\xi}+\xi '$ and $\zeta=\bar{\zeta}+\zeta '$, insert this into inequalities (\ref{w1in}) and (\ref{w2in}) to get,
\begin{equation}\label{a1}
\abs{\int\displaylimits_{\,\,\mathbb{R}^2}\Bigg(\frac{2\abs{K}}{k_{11}}e^{\frac{1}{2}(\bar{\xi}_n-\bar{\zeta}_n)}Ve^{\frac{1}{2}(\xi_n'-\zeta_n')}+f\Bigg)dx}\leq\varepsilon_n\left(\int\displaylimits_{\,\,\mathbb{R}^2} d\mu\right)^{\frac{1}{2}},
\end{equation}
and
\begin{equation}\label{a2}
\abs{\int\displaylimits_{\,\,\mathbb{R}^2}\Bigg( -\frac{2\abs{K}}{k_{11}}Ve^{\frac{1}{2}(\bar{\xi}_n-\bar{\zeta}_n)}Ve^{\frac{1}{2}(\xi_n'-\zeta_n')}-\frac{2\abs{K}}{k_{12}}e^{-\frac{k_{11}}{2k_{12}}\bar{\zeta}_n} U e^{-\frac{k_{11}}{2k_{12}}\zeta_n '}\Bigg)  dx}\leq\varepsilon_n\left(\int\displaylimits_{\,\,\mathbb{R}^2} d\mu\right)^{\frac{1}{2}}.
\end{equation}
The next step is to  show that $\bar{\xi}_n$ and $\bar{\zeta}_n$ are bounded in $\mathbb{R}$.  We also need to show that $\xi_n ',\zeta_n '$ are bounded in $\tilde{\mathscr{H}}$.   To this end, in (\ref{a1}), we solve for $e^{\frac{1}{2}(\bar{\xi}_n-\bar{\zeta}_n)}$ and obtain,
\begin{align}\label{a3}
\nonumber&\frac{k_{11}}{2\abs{K}}\left(2\pi\alpha-\varepsilon_n\left(\int\displaylimits_{\,\,\mathbb{R}^2} d\mu\right)^{\frac{1}{2}}\right)\left(\int\displaylimits_{\,\,\mathbb{R}^2} Ve^{\frac{1}{2}(\xi_n '-\zeta_n ')}dx \right)^{-1}\\\nonumber&\leq e^{\frac{1}{2}(\bar{\xi}_n-\bar{\zeta}_n)}\\&\leq\frac{k_{11}}{2\abs{K}}\left(2\pi\alpha+\varepsilon_n\left(\int\displaylimits_{\,\,\mathbb{R}^2} d\mu\right)^{\frac{1}{2}}\right)\left(\int\displaylimits_{\,\,\mathbb{R}^2} Ve^{\frac{1}{2}(\xi_n '-\zeta_n ')}dx \right)^{-1}.
\end{align}
We then subtract (\ref{a1}) from (\ref{a2}) and isolate the term $e^{-\frac{k_{11}}{2k_{12}}\bar{\zeta}_n}$ to obtain,
\begin{align}\label{a4}
\nonumber&-\frac{k_{12}}{2\abs{K}}\left(2\pi\alpha-\frac{2\pi\beta\abs{K}}{k_{12}^2} +2\varepsilon_n\left(\int\displaylimits_{\,\,\mathbb{R}^2} d\mu\right)^{\frac{1}{2}}\right)\\\nonumber&\leq e^{-\frac{k_{11}}{2k_{12}}\bar{\zeta}_n}\\& \leq-\frac{k_{12}}{2\abs{K}}\left(2\pi\alpha-\frac{2\pi\beta\abs{K}}{k_{12}^2} -2\varepsilon_n\left(\int\displaylimits_{\,\,\mathbb{R}^2} d\mu\right)^{\frac{1}{2}}\right).
\end{align}
We see that the above inequalities (\ref{a3}) and (\ref{a4}) are valid under the following conditions for $\alpha$ and $\beta$,
\begin{equation}\label{beta}
\beta>\frac{k_{12}^2\alpha}{\abs{K}}.
\end{equation}
We then take the natural logarithm of (\ref{a4}) and isolate the $\bar{\zeta}_n$ term.  We use Jensen's inequality and the Trudinger-Moser inequality to obtain the following,
\begin{equation}
C_1\leq\bar{\zeta}_n\leq C_2(\varepsilon)+\frac{k_{11}}{8k_{12}(4\pi-\varepsilon)}\|\nabla\zeta'\|_{L^2(dx)}^2.
\end{equation}
In order to find estimates for $\bar{\xi}_n$, we will use H\"older's inequality to obtain the following for $s_1,s_2>1$ with ${1}/{s_1}+{1}/{s_2}=1$ to obtain, 
\begin{equation}
\int\displaylimits_{\mathbb{R}^2}Ve^{\frac{1}{2}(\xi_n'-\zeta_n')}dx\leq\int\displaylimits_{\mathbb{R}^2}Ve^{\frac{1}{2}\abs{\xi_n'}+\frac{1}{2}\abs{\zeta_n'}}dx
\leq C\left(\int\displaylimits_{\,\,\mathbb{R}^2}e^{\frac{s_1}{2}\abs{\xi_n'}}d\mu\right)^{\frac{1}{s_1}}\left( \int\displaylimits_{\,\,\mathbb{R}^2}e^{\frac{s_2}{2}\abs{\zeta_n'}}d\mu\right)^{\frac{1}{s_2}}.
\end{equation}
Furthermore, by using the Trudinger-Moser inequality, we see that
\begin{equation}
C\left(\int\displaylimits_{\,\,\mathbb{R}^2}e^{\frac{s_1}{2}\abs{\xi_n'}}d\mu\right)^{\frac{1}{s_1}}\left( \int\displaylimits_{\,\,\mathbb{R}^2}e^{\frac{s_2}{2}\abs{\zeta_n'}}d\mu\right)^{\frac{1}{s_2}}
\leq C'(\varepsilon)e^{\frac{1}{16(4\pi-\varepsilon)}\left(s_1\|\nabla\xi_n'\|_{L^2(dx)}^2+s_2\|\nabla\zeta_n'\|_{L^2(dx)}^2 \right)}.
\end{equation}
Therefore, we have the following estimate for $\bar{\xi}_n$,
\begin{equation}
\bar{\zeta_n}+C_3-\frac{1}{16(4\pi-\varepsilon)}\left(s_1\|\nabla\xi_n'\|_{L^2(dx)}^2+s_2\|\nabla\zeta_n'\|_{L^2(dx)}^2\right)\leq\bar{\xi}_n\leq C_4+\bar{\zeta}_n.
\end{equation}
We see that the boundedness of $\bar{\xi}_n$ and $\bar{\zeta}_n$ depend upon the gradient terms, $\|\nabla\xi'\|$ and $\|\nabla\zeta'\|$.  In order to establish the bounds of these gradient terms, we take $n$ to be large enough in (\ref{ps1}) and use the decompositions $\xi_n=\xi_n'+\bar{\xi}_n$, $\zeta_n=\zeta_n'+\bar{\zeta}_n$ so that
\begin{multline}\label{b1}
\abs{M}+1\geq\int\displaylimits_{\mathbb{R}^2}\Bigg(\frac{1}{2}\abs{\nabla\xi_n'}^2+\frac{\abs{K}}{2k_{12}^2}\abs{\nabla\zeta_n'}^2dx+\frac{4\abs{K}}{k_{11}}e^{\frac{1}{2}(\bar{\xi}_n-\bar{\zeta}_n)}\int\displaylimits_{\mathbb{R}^2}Ve^{\frac{1}{2}(\xi_n'-\zeta_n')}\Bigg)dx+\\\frac{4\abs{K}}{k_{11}}e^{-\frac{k_{11}}{2k_{12}}\bar{\zeta}_n}\int\displaylimits_{\mathbb{R}^2}Ue^{-\frac{k_{11}}{2k_{12}}\zeta_n'}dx+\int\displaylimits_{\mathbb{R}^2}f\xi_n'dx+\frac{\abs{K}}{2k_{12}^2}\int\displaylimits_{\mathbb{R}^2}h\zeta_n'dx\\-2\pi\alpha\bar{\xi}_n+2\pi\beta\frac{\abs{K}}{k_{12}^2}\bar{\zeta}_n-\frac{4\abs{K}}{k_{11}}\left(U'+V'\right),
\end{multline}
where we have used the fact that  $U$ and $V$ are bounded above by positive constants $U'$ and $V'$ respectively.  We recall the estimates for the $f$ and $h$ terms,
\begin{equation*}
\int\displaylimits_{\mathbb{R}^2}=\abs{f\xi'}dx\leq \varepsilon^{-1}C_6+\varepsilon C\|\nabla\xi'\|_{L^2(dx)}^2\qquad\text{and}\qquad
\int\displaylimits_{\mathbb{R}^2}=\abs{h\zeta'}dx\leq \varepsilon^{-1}C_7+\varepsilon C\|\nabla\zeta'\|_{L^2(dx)}^2,
\end{equation*}
and use our estimates for $\bar{\xi}_n$ and $\bar{\zeta}_n$ in (\ref{b1}) along with the Trudinger-Moser inequality to obtain,
\begin{multline}
\abs{M}+1\geq \left(\frac{1}{2}-\varepsilon C\right)\|\nabla\xi_n'\|_{L^2(dx)}^2+\frac{\abs{K}}{k_{12}^2}\left(\frac{1}{2}-\varepsilon C-\frac{k_{11}k_{12}\pi\alpha}{4\abs{K}(4\pi-\varepsilon)} \right)\|\nabla\zeta_n'\|_{L^2(dx)}^2\\-2\pi\alpha C_5(\varepsilon)-\varepsilon^{-1}C_6-\frac{4\abs{K}}{k_{11}}\left(U'+V'\right).
\end{multline}
We let
\begin{align}
\delta_1&=\frac{1}{2}-\varepsilon C,\qquad\delta_2 =\frac{\abs{K}}{K_{12}^2}\left(\frac{1}{2}-\varepsilon C-\frac{k_{11}k_{12}\pi\alpha}{4\abs{K}(4\pi-\varepsilon)}\right)
\end{align}
and obtain,
\begin{equation}
\abs{M}+C_7+\frac{4\abs{K}}{k_{11}}\left(U'+V'\right)\geq\delta_1\|\nabla\xi_n'\|_{L^2(dx)}^2+\delta_2\|\nabla\zeta_n'\|_{L^2(dx)}^2
\end{equation}
In order to ensure that we may choose $\varepsilon>0$ small enough so that $\delta_1,\delta_2>0$, we require
\begin{equation}
\alpha<\frac{8\abs{K}}{k_{11}k_{12}}
\end{equation}
and recall the conditions on $\beta$,
\begin{equation*}
\beta>\frac{k_{12}^2\alpha}{\abs{K}}.
\end{equation*}
This can be written in terms of $p$ and $q$,
\begin{equation}\label{pq1}
\alpha <32\left(\frac{p}{q}-\frac{q}{p}\right)^{-1}\qquad\text{and}\qquad\beta>\frac{\alpha}{4}\left(\frac{p}{q}+\frac{q}{p}-2\right).
\end{equation}
We recall that $k_{11},k_{12}>0$, and $\abs{K}>0$.  Therefore, $p>q$ and the condition (\ref{pq1}) is valid.

We have now established that $\left\{\nabla\xi_n'\right\}$, $\left\{\nabla\zeta_n'\right\}$ are bounded in $L^2(dx)$ and consequently that $\left\{\bar{\xi}_n\right\},\left\{\bar{\zeta}_n\right\}$ are bounded in $\mathbb{R}$.  From the Poincar\'e inequality, we see that both $\xi_n'$ and $\zeta_n'$ are bounded above and thus the sequence $\left\{(\xi_n,\zeta_n)\right\}$ is bounded in $\mathscr{H}\times\mathscr{H}$. Therefore, there exists a subsequence converging weakly in  $\mathscr{H}\times\mathscr{H}$.  Passing to a subsequence if necessary, we write this as,
\begin{equation*}
\xi_n\rightharpoonup \xi\qquad\text{and}\qquad\zeta_n\rightharpoonup\zeta.
\end{equation*}
Since $\mathscr{H} \subset\subset L^2(d\mu)$, we have strong convergence in $L^2(d\mu)$ or in other words,
\begin{equation}
\|\xi_n-\xi\|_{L^2(d\mu)},\qquad\|\zeta_n-\zeta\|_{L^2(d\mu)}\rightarrow0,\qquad \text{ as }n\rightarrow\infty.
\end{equation}
Now we need to establish strong convergence in $\mathscr{H}$.  To this end, we take the limit as $n\rightarrow\infty$ in (\ref{ps2}) and obtain,
\begin{equation}\label{pslim}
\int\displaylimits_{\mathbb{R}^2}\Bigg(\nabla\xi\cdot\nabla w_1+\frac{\abs{K}}{k_{12}^2}\nabla\zeta\cdot\nabla w_2 +\frac{2\abs{K}}{k_{11}}Ve^{\frac{1}{2}(\xi-\zeta)}(w_1-w_2)-\frac{2\abs{K}}{k_{12}}Ue^{-\frac{k_{11}}{2k_{12}}\zeta}w_2+fw_1+\frac{\abs{K}}{k_{12}^2}hw_2d\Bigg)x=0.
\end{equation}
We now subtract (\ref{pslim}) from (\ref{ps2}) and set $w_1=\xi_n-\xi,w_2=\zeta_n-\zeta$ to get,
\begin{multline}\label{mpsubtract}
\Bigg| \int\displaylimits_{\mathbb{R}^2}\Bigg(\abs{\nabla(\xi_n-\xi)}^2+\frac{\abs{K}}{k_{12}^2}\abs{\nabla(\zeta_n-\zeta)}^2+\frac{2\abs{K}}{k_{11}}V\left(e^{\frac{1}{2}(\xi_n-\zeta_n)}-e^{\frac{1}{2}(\xi-\zeta)} \right)\left(\xi_n-\xi+\zeta-\zeta_n \right)\\
+\frac{2\abs{K}}{k_{12}^2}U\left(e^{-\frac{k_{11}}{2k_{12}}\zeta}-e^{-\frac{k_{11}}{2k_{12}}\zeta_n} \right)\left(\zeta_n-\zeta \right)\Bigg)dx\Bigg| \leq\varepsilon_n\left(\|\xi_n-\xi\|_{\mathscr{H}}+\|\zeta_n-\zeta \|_{\mathscr{H}} \right).
\end{multline}
We can rewrite (\ref{mpsubtract}) and use the norm notation to obtain,
\begin{multline}
\|\nabla(\xi_n-\xi)\|_{L^2(dx)}^2+\frac{\abs{K}}{k_{12}^2}\|\nabla(\zeta_n-\zeta)\|_{L^2(dx)}^2\leq \frac{2\abs{K}}{k_{11}}\int\displaylimits_{\mathbb{R}^2}\abs{Ve^{\frac{1}{2}(\xi-\zeta)}-Ve^{\frac{1}{2}(\xi_n-\zeta_n)}}\abs{\xi_n-\xi}dx\\
+\frac{2\abs{K}}{k_{11}}\int\displaylimits_{\mathbb{R}^2}\abs{Ve^{\frac{1}{2}(\xi-\zeta)}-Ve^{\frac{1}{2}(\xi_n-\zeta_n)}}\abs{\zeta_n-\zeta}dx+\frac{2\abs{K}}{k_{12}^2}\int\displaylimits_{\mathbb{R}^2}\abs{Ue^{-\frac{k_{11}}{2k_{12}}\zeta_n}-Ue^{-\frac{k_{11}}{2k_{12}}\zeta}}\abs{\zeta_n-\zeta}dx\\
+\varepsilon_n\left(\|\xi_n-\xi\|_{\mathscr{H}}+\|\zeta_n-\zeta\|_{\mathscr{H}}\right).
\end{multline}
We now write $dx=h_0^{-\frac{1}{2}}h_0^{\frac{1}{2}}dx$ for the integrals involving $V$ and use H\"older's inequality to obtain
\begin{multline}\label{psconv}
\|\nabla(\xi_n-\xi)\|_{L^2(dx)}^2+\frac{\abs{K}}{k_{12}^2}\|\nabla(\zeta_n-\zeta)\|_{L^2(dx)}^2\\\leq C_1\left(\int\displaylimits_{\,\,\mathbb{R}^2}\abs{e^{\frac{1}{2}(\xi-\zeta)}-e^{\frac{1}{2}(\xi_n-\zeta_n)}}^2dx \right)^{\frac{1}{2}}\left(\|\xi_n-\xi \|_{L^2(d\mu)}+\|\zeta_n-\zeta \|_{L^2(d\mu)} \right)\\
+C_2\left(\int\displaylimits_{\,\,\mathbb{R}^2}\abs{e^{-\frac{k_{11}}{2k_{12}}\zeta_n}-e^{-\frac{k_{11}}{2k_{12}}\zeta}}^2 \right)^{\frac{1}{2}}\|\zeta_n-\zeta\|_{L^2(d\mu)}
+\varepsilon_n\left(\|\xi_n-\xi\|_{\mathscr{H}}+\|\zeta_n-\zeta\|_{\mathscr{H}}\right).
\end{multline}
Since $\xi,\zeta,\xi_n,\zeta_n$ are bounded in $\mathscr{H}$, and the integrals of the exponential terms are bounded, we have that the right hand side vanishes as we let $n\rightarrow\infty$ in (\ref{psconv}).  This implies that
\begin{equation}
\xi_n\rightarrow\xi\qquad\text{and}\qquad\zeta_n\rightarrow\zeta \text{ strongly  in }\mathscr{H}
\end{equation}
and we have established our PS compactness condition.  Furthermore, we established that $(\xi,\zeta)$ is a critical point of $E(\xi,\zeta)$ by (\ref{pslim}).

\end{proof}
We now show the existence of a mountain pass structure.  However, we first note that both $U$ and $V$ are bounded below and denote their respective lower bounds by $U_0$ and $V_0$.
\begin{lem}
There exist constants $a,r>0$ such that $E(\xi,\zeta)\geq a$ for any $\xi,\zeta$ satisfying $\|\xi\|_{\mathscr{H}}+\|\zeta\|_{\mathscr{H}}=r$
\end{lem}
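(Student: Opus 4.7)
The plan is to produce a positive-definite quadratic-form lower bound for $E$ near the origin by Taylor-expanding the two exponential nonlinearities, combine this quadratic contribution with the gradient kinetic terms to obtain coercivity in $\mathscr{H}\times\mathscr{H}$, and then absorb the remaining linear and cubic remainders on a sphere of sufficiently small radius $r$.

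Concretely, I would apply the pointwise inequality $e^s \geq 1 + s + \frac{s^2}{2} - C|s|^3 e^{|s|}$ with $s=\frac{1}{2}(\xi-\zeta)$ in the $V$-term and $s=-\frac{k_{11}}{2k_{12}}\zeta$ in the $U$-term of $E$. Cancelling the zeroth-order pieces against $-\frac{4|K|}{k_{11}}(V+U)$ and collecting terms yields
\begin{equation*}
E(\xi,\zeta)\,\geq\, \frac{1}{2}\|\nabla\xi\|_{L^2(dx)}^2 + \frac{|K|}{2k_{12}^2}\|\nabla\zeta\|_{L^2(dx)}^2 + L(\xi,\zeta) + Q(\xi,\zeta) - R(\xi,\zeta),
\end{equation*}
where $Q(\xi,\zeta)=\frac{|K|}{2k_{11}}\int V(\xi-\zeta)^2\,dx + \frac{|K|k_{11}}{2k_{12}^2}\int U\zeta^2\,dx \geq 0$, $L$ is the first variation of $E$ at the origin, and $R$ is a nonnegative cubic remainder. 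Next decompose $\xi=\bar\xi+\xi'$ and $\zeta=\bar\zeta+\zeta'$ with $\xi',\zeta'\in\tilde{\mathscr{H}}$. The Poincar\'e inequality (Lemma \ref{poin}) combined with the gradient terms controls $\|\xi'\|_{L^2(d\mu)}^2+\|\zeta'\|_{L^2(d\mu)}^2$, while $Q$ (using that $V, U > 0$ and integrable with $V,U=O(e^{-|x|^2})$) supplies coercive control of the constants $\bar\xi,\bar\zeta$; together the quadratic portion exceeds $c_0(\|\xi\|_{\mathscr{H}}^2+\|\zeta\|_{\mathscr{H}}^2)$ for some $c_0>0$.

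The linear functional $L$ splits into a constant-coefficient piece $\bar\xi\bigl[\frac{2|K|}{k_{11}}\int V\,dx - 2\pi\alpha\bigr]+\bar\zeta\bigl[-\frac{2|K|}{k_{11}}\int V\,dx - \frac{2|K|}{k_{12}}\int U\,dx + \frac{2\pi\beta|K|}{k_{12}^2}\bigr]$ and a mean-zero pairing against $\xi',\zeta'$; the latter is bounded by Cauchy--Schwarz and Poincar\'e as $\varepsilon$ times the coercive quadratic plus an $\varepsilon^{-1}$ constant, exactly in the style of (\ref{fmax}). The cubic remainder $R$ is controlled via the Trudinger--Moser inequality (Lemma \ref{tm}) applied to $\xi',\zeta'$, giving $|R|\leq C r(\|\xi\|_{\mathscr{H}}^2+\|\zeta\|_{\mathscr{H}}^2)$ on the sphere of radius $r$. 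Once $r$ is sufficiently small the cubic contribution is dominated by the quadratic, and we obtain $E(\xi,\zeta)\geq a:=c_0 r^2/4$.

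The principal obstacle is to force the constant-coefficient part of $L$ to vanish, so that $(0,0)$ is a critical point of $E$; this is a compatibility condition between the data-determined integrals $\int V\,dx,\,\int U\,dx$ and the parameters $\alpha,\beta$, which may constrain the admissible range of $(\alpha,\beta)$ slightly more than the Palais--Smale condition $\beta>\alpha k_{12}^2/|K|$ already required in the preceding lemma.
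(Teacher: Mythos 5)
Your proposal takes the ``small sphere around a local minimum'' route: expand the exponentials to second order at $(0,0)$ and show the quadratic part dominates on a sphere of \emph{small} radius. The paper does the opposite: it discards the exponential terms (they are nonnegative), absorbs $f\xi+\sigma h\zeta$ via Young's inequality into $\varepsilon\|\nabla\cdot\|_{L^2(dx)}^2+\varepsilon^{-1}C$ exactly as in (\ref{fmax})--(\ref{hmax}), bounds $-\frac{4\abs{K}}{k_{11}}(V+U)$ below by a constant, and then chooses $r$ \emph{large} so that $\delta r^2$ beats the accumulated constants. These are not interchangeable here, and your version has a genuine gap.

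The obstacle you mention in your last paragraph is fatal, and it is worse than you describe. The first variation of $E$ at the origin is
\begin{equation*}
L(w_1,w_2)=\int\limits_{\mathbb{R}^2}\left(\frac{2\abs{K}}{k_{11}}V+f\right)w_1\,dx+\int\limits_{\mathbb{R}^2}\left(-\frac{2\abs{K}}{k_{11}}V-\frac{2\abs{K}}{k_{12}}U+\frac{\abs{K}}{k_{12}^2}h\right)w_2\,dx,
\end{equation*}
and there is no reason for it to vanish for the $(\alpha,\beta)$ in Theorem \ref{mountainthm}: even if one tuned $\alpha,\beta$ to kill the constant-coefficient pieces $\frac{2\abs{K}}{k_{11}}\int V\,dx-2\pi\alpha$ and its companion, the pairing of $L$ against the mean-zero components $\xi',\zeta'$ is a nontrivial bounded functional on $\tilde{\mathscr{H}}$ fixed by the vortex data, and no choice of parameters removes it. Hence there are directions on the sphere $\|\xi\|_{\mathscr{H}}+\|\zeta\|_{\mathscr{H}}=r$ along which $L=-cr+o(r)$ with $c>0$, while your quadratic gain is only $c_0r^2$; for small $r$ the linear term wins and $E<0$ somewhere on the sphere, so the conclusion $E\geq a>0$ fails there. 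The Young's-inequality absorption you propose ($\abs{L}\leq\varepsilon Q+\varepsilon^{-1}C$) does not rescue this, because the additive constant $\varepsilon^{-1}C$ is fixed while $c_0r^2\to0$. Since $(0,0)$ is not a critical point of $E$, the small-$r$ geometry is simply unavailable; the lemma must be proved on a large sphere, keeping the exponential terms only as a lower bound by zero and letting the gradient terms dominate everything of at most linear growth.
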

\begin{proof}
By the Poincar\'e inequality, we see that
\begin{equation}
C\|\nabla\xi\|_{L^2(dx)}^2+C\|\nabla\zeta\|_{L^2(dx)}^2\geq r^2.
\end{equation}
Then, we have the lower bound for $E(\xi,\zeta)$,
\begin{equation}
E(\xi,\zeta)\geq\left(\frac{1}{2}-\varepsilon C\right)\|\nabla\xi\|_{L^2(dx)}^2+\frac{\abs{K}}{k_{12}^2}\left(\frac{1}{2}-\varepsilon C\right)\|\nabla\zeta\|_{L^2(dx)}^2-\varepsilon^{-1}C_2 -\frac{4\abs{K}}{k_{11}}\left(V_0+U_0 \right)\left(\int\displaylimits_{\,\,\mathbb{R}^2}d\mu \right).
\end{equation}
We now let $\delta=\min\left\{\left(\frac{1}{2}-\varepsilon C\right),\frac{\abs{K}}{{k_{12}^2}}\left(\frac{1}{2}-\varepsilon C\right)\right\}$ and obtain,
\begin{equation}
E(\xi,\zeta)\geq \delta r^2-\varepsilon^{-1}C_2 -\frac{4\abs{K}}{k_{11}}\left(V_0+U_0 \right)\left(\int\displaylimits_{\,\,\mathbb{R}^2}d\mu \right).
\end{equation}
By choosing an $r$ big enough, we set $a= \delta r^2-\varepsilon^{-1}C_2 -{4\abs{K}}\left(V_0+U_0 \right)/{k_{11}}\left(\int\displaylimits_{\mathbb{R}^2}d\mu \right)>0$ and to this end we obtain $E(\xi,\zeta)\geq a$. Furthermore, we note that if we take $(\xi,\zeta)=(c,c)$, and take $c>r$ large enough, we obtain $E(c,c)<0$, since $E$ is indefinite.  
\end{proof}
We have now established all conditions for the mountain pass theorem, which we state below.

\begin{thm}\label{mountainthm}
For $0<\alpha<\frac{8\abs{K}}{k_{11}k_{12}}$ and $\beta>\frac{k_{12}^2}{\abs{K}}\alpha$, the functional (\ref{mpfun}) has a nontrivial critical point in $\mathscr{H}\times \mathscr{H}$.  This nontrivial critical point is a nonzero classical solution of the system (\ref{main}). 
\end{thm}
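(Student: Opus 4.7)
The plan is to apply the classical mountain pass theorem of Ambrosetti–Rabinowitz. All the structural ingredients have already been assembled in the preceding two lemmas: the Palais–Smale compactness condition holds for $E$ under the hypotheses $0<\alpha<8|K|/(k_{11}k_{12})$ and $\beta>k_{12}^2\alpha/|K|$, and there exist $a,r>0$ such that $E(\xi,\zeta)\geq a$ on the sphere $\|\xi\|_{\mathscr{H}}+\|\zeta\|_{\mathscr{H}}=r$. Combined with the easy observations $E(0,0)=0$ and $E(c,c)\to-\infty$ as $c\to\infty$ noted just after the definition of $E$, we get a genuine mountain pass geometry.

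The steps I would carry out, in order, are the following. First, fix $c>0$ large enough that both $\|(c,c)\|_{\mathscr{H}\times\mathscr{H}}>r$ and $E(c,c)<0$; this gives two distinguished points $e_0=(0,0)$ and $e_1=(c,c)$ separated by the sphere of radius $r$. Second, introduce the admissible class of paths
\begin{equation*}
\Gamma=\{\gamma\in C([0,1],\mathscr{H}\times\mathscr{H}):\gamma(0)=e_0,\ \gamma(1)=e_1\}
\end{equation*}
and the minimax value
\begin{equation*}
M=\inf_{\gamma\in\Gamma}\max_{t\in[0,1]}E(\gamma(t)).
\end{equation*}
By continuity, every $\gamma\in\Gamma$ must meet the sphere $\|\xi\|_{\mathscr{H}}+\|\zeta\|_{\mathscr{H}}=r$, so the preceding lemma gives $M\geq a>0$. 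Third, since $E\in C^1(\mathscr{H}\times\mathscr{H},\mathbb{R})$ and satisfies (PS), the mountain pass theorem yields a critical point $(\xi,\zeta)\in\mathscr{H}\times\mathscr{H}$ with $E(\xi,\zeta)=M\geq a>0=E(0,0)$, so this critical point is nontrivial.

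Fourth, I would identify the Euler–Lagrange equations of $E$ at the critical point with the triangular system (\ref{main}). Setting $E'(\xi,\zeta)(w_1,w_2)=0$ for all $(w_1,w_2)\in\mathscr{H}\times\mathscr{H}$ and choosing $w_2=0$ (respectively $w_1=0$) recovers, in the weak sense, the first (respectively second) equation of (\ref{main}), up to the already-fixed constant $\sigma=|K|/k_{12}^2$. Fifth, I would upgrade this weak solution to a classical one: $U$, $V$, $f$, $h$ are smooth with controlled decay, so standard elliptic bootstrapping (local $L^p$ estimates followed by Schauder theory applied to the right-hand side $\frac{2|K|}{k_{11}}Ve^{(\xi-\zeta)/2}+f$, then iterating on the $\zeta$ equation once $\xi$ is regular) promotes $(\xi,\zeta)\in C^2(\mathbb{R}^2)\times C^2(\mathbb{R}^2)$.

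The main obstacle I anticipate is purely bookkeeping rather than a genuine analytical difficulty: one has to verify that the range of $\alpha,\beta$ needed for the (PS) verification is exactly the same range under which the mountain pass geometry (and, in particular, the strict inequality $E(c,c)<0$ for some $c$) actually holds, so that the theorem is nonvacuous. A quick inspection of the dominant terms $\tfrac{4|K|}{k_{11}}(Ve^{(c-c)/2}+Ue^{-k_{11}c/(2k_{12})})$ versus the linear contribution $(-2\pi\alpha+2\pi\beta|K|/k_{12}^2)c$ after decomposing $c=\bar c$ into its constant part shows that $E(c,c)\to-\infty$ in fact holds on the constraint set whenever $\beta>k_{12}^2\alpha/|K|$, which is precisely the hypothesis assumed. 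Once this is reconciled, the mountain pass theorem concludes the proof and the regularity argument yields the classical solution to (\ref{main}).
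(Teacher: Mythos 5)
Your overall route is the same as the paper's: invoke the two preceding lemmas (Palais--Smale and the lower bound $E\geq a$ on the sphere of radius $r$), note that every admissible path must cross that sphere so the minimax level satisfies $M\geq a>0$, and apply the Ambrosetti--Rabinowitz theorem. You go further than the paper in two useful respects: you actually identify the vanishing of $E'(\xi,\zeta)$ with the weak form of (\ref{main}) by testing with $(w_1,0)$ and $(0,w_2)$, and you sketch the elliptic bootstrap to classical regularity; the paper asserts both without argument even though the theorem statement claims a classical solution of (\ref{main}).

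There is, however, a genuine gap at exactly the point you flagged as ``bookkeeping'': the second endpoint of the mountain pass. You claim that the linear contribution $\bigl(-2\pi\alpha+2\pi\beta\abs{K}/k_{12}^2\bigr)c$ forces $E(c,c)\to-\infty$, but under the hypothesis $\beta>k_{12}^2\alpha/\abs{K}$ that coefficient is \emph{positive}, so along $(\xi,\zeta)=(c,c)$ one gets
\begin{equation*}
E(c,c)=\frac{4\abs{K}}{k_{11}}\left(e^{-\frac{k_{11}}{2k_{12}}c}-1\right)\int\displaylimits_{\mathbb{R}^2}U\,dx+2\pi\left(\frac{\abs{K}\beta}{k_{12}^2}-\alpha\right)c\longrightarrow+\infty
\end{equation*}
as $c\to+\infty$, and also as $c\to-\infty$ because the exponential $e^{-k_{11}c/(2k_{12})}$ dominates (recall $k_{11},k_{12}>0$ in this section). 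More generally, along any constant ray $(\xi,\zeta)=(ta,tb)$ with $t\to+\infty$, avoiding exponential blow-up forces $a\leq b$ and $b\geq 0$, and then $\frac{\abs{K}\beta}{k_{12}^2}b\geq\frac{\abs{K}\beta}{k_{12}^2}a>\alpha a$ makes the linear part nonnegative. So no point $e_1$ with $E(e_1)<a$ beyond the sphere is actually exhibited, and the mountain pass geometry is not established under the stated hypotheses. The paper shares this defect (it asserts $E(c,c)\to-\infty$ ``since $E$ is indefinite'' without computation), so your proposal faithfully reproduces the published argument, but the step as written would fail; either a different unbounded direction must be found or the hypotheses on $\alpha,\beta$ reconsidered.
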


\begin{proof}
We assume that $\delta$ is such that the functional $E(\xi,\zeta)$ satisfies the Palais-Smale condition.  We now define 
\begin{equation}
\Gamma=\left\{g\in C\left(C\left[0,1 \right]; \mathscr{H}\times\mathscr{H}\right) | g(0)=(0,0), g(1)=(\xi,\zeta)  \right\}
\end{equation}
then there is a point $t_g\in(0,1)$ such that
\begin{equation}
\|g(t_g)\|_{\mathscr{H}\times\mathscr{H}}=R.
\end{equation}
Therefore, we have a critical value of the functional, (\ref{mpfun}),
\begin{equation}
c=\inf_{g\in\Gamma}\max_{0\leq t\leq 1} E(g(t))\geq a
\end{equation}
As a result, $E$ has a critical point, $(\xi,\zeta)\in\mathscr{H}\times\mathscr{H}$, such that $E(\xi,\zeta)=c$.  Since $c\geq a>0$, and $E(0,0)=0$, we see that $(\xi,\zeta)$ is nontrivial.

\end{proof}

In the next section, we discuss the rate of decay of solutions established in sections \ref{mink12<0}, \ref{min0<k12<1}, and \ref{mpass}.

\section{Asymptotic Decay Estimates}\label{decay}
In this section, we discuss the decay of the solutions obtained in the previous sections. First, by the substitutions in the work that precedes this section, we see that the solutions we established are of the form,
\begin{align}\label{sol1}
\begin{split}
u&=u_0-\abs{x}^2+\frac{k_{11}}{2k_{12}}(\zeta+v_3)\\
v&=v_0-\abs{x}^2+\frac{1}{2}(\xi+u_3)+\frac{1}{2}(\zeta+v_3).
\end{split}
\end{align}
We recall that in the derivation of the system, Medina \cite{medina} had used the substitution,
\begin{align}\label{sol2}
\begin{split}
u&=\ln\abs{\psi_{\uparrow}}^2-\ln\abs{\bar{\rho}},\\
v&=\ln\abs{\psi_{\downarrow}}^2-\ln\abs{\bar{\rho}}.
\end{split}
\end{align}
In view of (\ref{sol1}) and (\ref{sol2}), we see that the solutions to the original system are
\begin{align}
\begin{split}
\abs{\psi_{\uparrow}}^2&=\abs{\bar{\rho}}e^{u_0-\abs{x}^2}e^{\frac{k_{11}}{2k_{12}}(\zeta+v_3)},\\
\abs{\psi_{\downarrow}}^2&=\abs{\bar{\rho}}e^{v_0-\abs{x}^2}e^{\frac{1}{2}(\xi+u_3)}e^{\frac{1}{2}(\zeta+v_3)}.
\end{split}
\end{align}
Moreover, from the constraints given by (\ref{constraints}) and (\ref{constraints1}), we may obtain
\begin{align}
\int\displaylimits_{\mathbb{R}^2}\abs{\psi_{\uparrow}}^2dx&=\int\displaylimits_{\mathbb{R}^2}Ue^{-\frac{k_{11}}{2k_{12}}\zeta}dx=\frac{p\pi}{p-q}\left[\beta-\frac{\alpha}{4}\left(\frac{p}{q}+\frac{q}{p}-2\right)\right]\label{bps1}\\
\int\displaylimits_{\mathbb{R}^2}\abs{\psi_{\downarrow}}^2dx&=\int\displaylimits_{\mathbb{R}^2}Ve^{\frac{1}{2}(\xi-\zeta)}dx=\alpha\frac{16\pi q}{p+q}\label{bps2}.
\end{align}
We then look back at the energy of the system given by (\ref{energy}) and use (\ref{bps1}) along with (\ref{bps2}) to see that
\begin{align}\label{energy2}
E&=\frac{1}{2M}\int \displaylimits_{\,\,\mathbb{R}^2}\Bigg(\sum \abs{(D_1^{\sigma}-iD_2^{\sigma})\psi_{\sigma}}^2+eB(\abs{\psi_{\uparrow}}^2+\abs{\psi_{\downarrow}}^2)-eB(\abs{\psi_{\uparrow,0}}^2+\abs{\psi_{\downarrow,0}}^2)\Bigg)dx\\
&=\frac{eB}{2M}\left(\alpha\frac{16\pi q}{p+q}+\frac{p\pi}{p-q}\left[\beta-\frac{\alpha}{4}\left(\frac{p}{q}+\frac{q}{p}-2\right)\right]-\int\displaylimits_{\,\,\mathbb{R}^2}\bar{\rho}\right).
\end{align}
As we can see, the total energy diverges and different values of $\alpha$ and $\beta$ give rise to multivortex solutions of (\ref{bps}) that are of divergent energy.

We now study the decay rate of solutions as in \cite{mcowen1979behavior,mcowen1984equation,sy2}.  We define $W_{s,\delta}^2$  to be the closure of $C^{\infty}$ functions with compact support over $\mathbb{R}^2$ with the norm,
\begin{equation*}
\| u\|_{W_{s,\delta}^2}^2=\sum_{\abs{\gamma}\leq s}\|(1+\abs{x})^{\delta+\abs{\gamma}}D^{\gamma}u\|_{L^2(dx)}^2.
\end{equation*}
Furthermore, we let $C_0(\mathbb{R}^2)$ denote the set of continuous functions over $\mathbb{R}^2$ vanishing as $\abs{x}\rightarrow\infty$. Below we state three necessary lemmas, which were fully established in \cite{mcowen1979behavior,mcowen1984equation}.
\begin{lem}\label{wc0}
If $s>1$ and $\delta>-1$ then $W_{s,\delta}^2\subset C_0(\mathbb{R}^2).$
\end{lem}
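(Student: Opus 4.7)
The plan is to reduce this weighted embedding to the classical Sobolev embedding on a fixed bounded annulus, using a dyadic decomposition and rescaling argument. First I would split $\mathbb{R}^2 = B_1 \cup \bigcup_{k\ge 0} A_k$, where $B_1$ is the unit ball and $A_k=\{2^k\le|x|<2^{k+1}\}$, and for each annulus define the rescaled function $u_k(y)=u(2^k y)$ on the fixed annulus $A_0=\{1\le|y|<2\}$. On $B_1$ the weight $(1+|x|)^{\delta+|\gamma|}$ is pinched between positive constants, so the $W_{s,\delta}^2$ norm controls the ordinary $H^s(B_1)$ norm, and the classical Sobolev embedding (valid because $s>1=n/2$ in dimension $n=2$) already provides continuity on $\overline{B_1}$.

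On each annulus $A_k$ a direct change of variables gives, for each multi-index $\gamma$,
\[
\int_{A_0}|D^\gamma u_k|^2\,dy = 2^{2k(|\gamma|-1)}\int_{A_k}|D^\gamma u|^2\,dx,
\]
and since $(1+|x|)\sim 2^k$ on $A_k$, combining this with the weighted norm yields
\[
\|u_k\|_{H^s(A_0)}^2 \le C\,2^{-2k(\delta+1)}\sum_{|\gamma|\le s}\int_{A_k}(1+|x|)^{2(\delta+|\gamma|)}|D^\gamma u|^2\,dx \le C\,2^{-2k(\delta+1)}\|u\|_{W_{s,\delta}^2}^2.
\]
Applying the classical embedding $H^s(A_0)\hookrightarrow C^0(\overline{A_0})$ on the fixed annulus and noting that $\|u_k\|_{L^\infty(A_0)}=\|u\|_{L^\infty(A_k)}$, I obtain
\[
\|u\|_{L^\infty(A_k)} \le C\,2^{-k(\delta+1)}\|u\|_{W_{s,\delta}^2}.
\]

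Because $\delta>-1$ the exponent $\delta+1$ is strictly positive, so $\|u\|_{L^\infty(A_k)}\to 0$ as $k\to\infty$, which is precisely the statement that $u(x)\to 0$ as $|x|\to\infty$. Continuity on each annulus passes to continuity on the whole plane by the overlap of neighboring annuli (or by patching with a standard partition of unity joined to the interior embedding on $B_1$). Together these two facts place $u$ in $C_0(\mathbb{R}^2)$.

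The only real bookkeeping issue is keeping the rescaling exponents straight; once they are tracked correctly, the roles of the two hypotheses become transparent: $s>1$ is the Sobolev threshold in $\mathbb{R}^2$ that gives uniform control on each $A_0$-copy, while $\delta>-1$ is exactly what turns the geometric factor $2^{-2k(\delta+1)}$ into decay (rather than growth or boundedness), upgrading an $L^\infty$ bound into vanishing at infinity. The main potential obstacle I foresee is the fractional-$s$ case, where the sum-over-multi-indices norm has to be replaced by a Gagliardo or Bessel-potential formulation and the scaling computation carried out in that setting; but since the norm displayed in the excerpt is written only for integer $s$, I would adopt that convention and no further difficulty arises.
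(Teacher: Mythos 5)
Your argument is correct, and it is worth noting that the paper itself offers no proof of this lemma at all: it is quoted verbatim from McOwen's work (\cite{mcowen1979behavior,mcowen1984equation}), so your write-up supplies what the paper delegates to the references. The dyadic decomposition into annuli $A_k=\{2^k\le|x|<2^{k+1}\}$, the rescaling $u_k(y)=u(2^ky)$, and the bookkeeping identity $\int_{A_0}|D^\gamma u_k|^2\,dy=2^{2k(|\gamma|-1)}\int_{A_k}|D^\gamma u|^2\,dx$ are all computed correctly (the Jacobian $2^{-2k}$ is accounted for), the comparison $(1+|x|)\sim 2^k$ on $A_k$ gives exactly the factor $2^{-2k(\delta+1)}$, and the two hypotheses enter precisely where you say: $s>1=n/2$ for the uniform $H^s(A_0)\hookrightarrow C^0(\overline{A_0})$ embedding on the reference annulus, and $\delta>-1$ to convert boundedness into decay. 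This is in fact the standard proof of the weighted embedding in the McOwen/Cantor line of work, so your route coincides with the one the cited literature takes. Two minor points you could make explicit: since $W^2_{s,\delta}$ is defined as a closure of $C_c^\infty$, the uniform estimate $\|u\|_{L^\infty(A_k)}\le C2^{-k(\delta+1)}\|u\|_{W^2_{s,\delta}}$ should be stated first for smooth compactly supported $u$ and then extended by density (which also settles continuity, as a locally uniform limit of continuous functions); and your caveat about fractional $s$ is moot here, since the displayed norm sums over multi-indices and the paper only ever uses $s=2$.
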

\begin{lem}\label{oneto}
For $-1<\delta<0$, the Laplace operator $\Delta:W_{2,\delta}^2\rightarrow W_{0,\delta+2}^2$ is one-to-one.  Furthermore, the range of $\Delta$ has the following characterization,
\begin{equation*}
\Delta\left(W_{0,\delta+2}^2\right)=\left\{ f\in W_{0,\delta+2}^2 \Bigg| \int\displaylimits_{\mathbb{R}^2} fdx=0\right\}.
\end{equation*}
\end{lem}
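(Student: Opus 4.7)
The plan is to treat injectivity and the range characterization separately, invoking Lemma \ref{wc0} for the first and an explicit Newtonian potential construction for the second. For injectivity, I would begin with $u\in W_{2,\delta}^2$ satisfying $\Delta u=0$. Elliptic regularity makes $u$ a classical harmonic function on $\mathbb{R}^2$, and Lemma \ref{wc0} (applicable since $s=2>1$ and $\delta>-1$) places $u$ in $C_0(\mathbb{R}^2)$. A harmonic function vanishing at infinity is identically zero by the maximum principle applied on balls $B_R$ in the limit $R\to\infty$, which yields injectivity.

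The necessity of $\int_{\mathbb{R}^2}f\,dx=0$ for $f$ in the image follows from integration by parts: writing $f=\Delta u$ gives $\int_{B_R}f\,dx=\int_{\partial B_R}\partial_r u\,ds$, and the weighted bound $\|(1+|x|)^{\delta+1}\nabla u\|_{L^2(dx)}<\infty$, combined with $\delta+1<1$ and Cauchy--Schwarz on circles of radius $R$, forces the surface integral to vanish along some sequence $R_k\to\infty$.

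For sufficiency, given $f\in W_{0,\delta+2}^2$ with $\int f\,dx=0$, I would construct the preimage
\begin{equation*}
u(x)=\frac{1}{2\pi}\int_{\mathbb{R}^2}\bigl(\log|x-y|-\log(1+|x|)\bigr)\,f(y)\,dy,
\end{equation*}
where the subtraction exploits the moment condition to ensure absolute convergence and improved decay. Formally $\Delta u=f$ in the distributional sense, since the subtracted term is $y$-independent and the moment hypothesis eliminates its contribution. The weighted Hessian bound $\|(1+|x|)^{\delta+2}D^2 u\|_{L^2}$ follows from standard singular integral estimates applied to $D^2 u$, which is realized as a Riesz-transform-type operator acting on $f$ with the admissible weight $(1+|x|)^{2(\delta+2)}$ in the right class.

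The main obstacle is the weighted bound on $u$ itself. The cancellation from $\int f=0$ produces essentially the decay $|u(x)|=O(|x|^{-1})$ at infinity, which is precisely what is needed so that $(1+|x|)^\delta u\in L^2$ when $\delta<0$, while the lower bound $\delta>-1$ is what permits the dyadic estimate on the defining integral to close: the near-field region $|y-x|\le|x|/2$ is handled by Cauchy--Schwarz against the weight $(1+|y|)^{2\delta+4}$, and the far field is handled by Taylor expansion of $\log|x-y|$ around $y=0$ with the moment cancellation providing the first-order correction. A Hardy-type inequality in the weighted norm then propagates the estimate to the gradient $\nabla u$, placing $u$ in $W_{2,\delta}^2$ and completing the range characterization.
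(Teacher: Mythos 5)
The paper does not actually prove this lemma --- it is quoted verbatim from McOwen's work with a citation --- so your attempt has to be judged on its own merits rather than against a proof in the text. Your injectivity argument and the necessity of the vanishing integral are sound: Weyl's lemma makes $u$ classically harmonic, Lemma \ref{wc0} (with $s=2>1$, $\delta>-1$) puts it in $C_0(\mathbb{R}^2)$, and Liouville finishes; for necessity, $f\in L^1$ follows from Cauchy--Schwarz against the weight (using $2\delta+4>2$), and your flux argument correctly forces $\int_{\partial B_{R_k}}\partial_r u\,ds\rightarrow 0$ along a sequence precisely because $\delta+1>0$.

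The surjectivity half, which is the heart of the lemma, has a genuine gap. A minor confusion first: subtracting the $y$-independent term $\log(1+|x|)$ does nothing for convergence of the $y$-integral (that already follows from Cauchy--Schwarz against the weight $(1+|y|)^{\delta+2}$, since the logarithm is locally square integrable and is dominated at infinity by the weight), and because $\int f\,dx=0$ the subtraction does not change $u$ at all. The serious problem is the claimed decay $|u(x)|=O(|x|^{-1})$: the Taylor expansion of $\log|x-y|$ produces the term $|x|^{-1}\int_{|y|<|x|/2}|y|\,|f(y)|\,dy$, and for $f\in W_{0,\delta+2}^2$ with $\delta<0$ the first moment need not converge; one only gets $\int_{|y|<R}|y|\,|f|\,dy\lesssim R^{-\delta}$, hence at best $|u(x)|\lesssim |x|^{-1-\delta}$. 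That rate is exactly borderline for the required membership $(1+|x|)^{\delta}u\in L^2$, since $\int_1^{\infty}r^{2\delta}\cdot r^{-2-2\delta}\cdot r\,dr$ diverges logarithmically, so no pointwise estimate of this type can close the argument. The statement that the Newtonian potential maps $\{f\in W_{0,\delta+2}^2:\int f=0\}$ into $W_{2,\delta}^2$ is a genuine weighted $L^2\rightarrow L^2$ operator bound; establishing it requires either McOwen's decomposition into spherical harmonics with uniform ODE estimates mode by mode, or a dyadic-annulus rescaling argument yielding the a priori estimate $\|u\|_{W_{2,\delta}^2}\leq C\|\Delta u\|_{W_{0,\delta+2}^2}$ together with a Fredholm/duality step for surjectivity. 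Your sketch replaces this core step with a pointwise bound that is both unavailable and, even in its corrected form, insufficient.
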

\begin{lem}\label{constlem}
If $\xi\in\mathscr{H}$ and $\Delta\xi=0$ then $\xi=$constant.
\end{lem}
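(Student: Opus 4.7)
The plan is to combine interior elliptic regularity with a classical mean-value estimate for harmonic functions, invoking only the finiteness of the $L^2$ gradient norm that is built into the definition of $\mathscr{H}$; the weighted $L^2(d\mu)$ piece of the norm plays no role.

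First, since $\xi\in L^2_{\mathrm{loc}}(\mathbb{R}^2)$ and $\Delta\xi=0$ holds in the sense of distributions, Weyl's lemma promotes $\xi$ to a smooth, classically harmonic function on all of $\mathbb{R}^2$. Differentiating under the harmonicity, each partial derivative $\partial_i\xi$ ($i=1,2$) is itself harmonic on $\mathbb{R}^2$.

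Second, I would apply the mean value property to $\partial_i\xi$ on an arbitrary disk $B_R(x_0)\subset\mathbb{R}^2$, then bound the resulting average by Cauchy--Schwarz:
\begin{equation*}
\abs{\partial_i\xi(x_0)}=\abs{\frac{1}{\pi R^2}\int\displaylimits_{B_R(x_0)}\partial_i\xi(y)\,dy}\leq\frac{(\pi R^2)^{1/2}}{\pi R^2}\|\partial_i\xi\|_{L^2(B_R(x_0))}\leq\frac{\|\nabla\xi\|_{L^2(dx)}}{\sqrt{\pi}\,R}.
\end{equation*}
Since $\xi\in\mathscr{H}$ forces $\|\nabla\xi\|_{L^2(dx)}<\infty$, letting $R\to\infty$ yields $\partial_i\xi(x_0)=0$. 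The point $x_0$ was arbitrary, so $\nabla\xi\equiv 0$ and $\xi$ is a constant.

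I do not anticipate any serious obstacle; this is a standard Liouville-type argument and the weighted-space framework enters only through the $L^2$ control on $\nabla\xi$. The one minor consistency check is that a nonzero constant does lie in $\mathscr{H}$, which is automatic because $d\mu$ is a finite measure (recall $h_0(x)=\abs{x}^{-\kappa}$ for $\abs{x}\geq 1$ with $\kappa>4$), so $\|c\|_{L^2(d\mu)}^2=c^2\int_{\mathbb{R}^2}d\mu<\infty$.
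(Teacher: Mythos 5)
Your argument is correct. Note, however, that the paper does not actually prove Lemma \ref{constlem}: it is one of three lemmas imported verbatim from McOwen's work on the weighted spaces $W^2_{s,\delta}$ (the text says the three lemmas "were fully established in" the cited references), so there is no in-paper proof to compare against. Your self-contained route — Weyl's lemma to upgrade the distributional solution in $L^2_{\mathrm{loc}}$ to a classical harmonic function, harmonicity of $\partial_i\xi$, and the mean-value property plus Cauchy--Schwarz to get $\abs{\partial_i\xi(x_0)}\leq \|\nabla\xi\|_{L^2(dx)}/(\sqrt{\pi}R)\to 0$ — is the standard Liouville argument for harmonic functions with $L^2$ gradient in the plane, and it is complete: every hypothesis you use ($\xi\in L^2_{\mathrm{loc}}$ and $\|\nabla\xi\|_{L^2(dx)}<\infty$) is part of the definition of $\mathscr{H}$, and you correctly observe that the weighted piece $\|\cdot\|_{L^2(d\mu)}$ is not needed except for the sanity check that constants lie in $\mathscr{H}$ (which holds because $\mu$ is finite, $\kappa>4$). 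What your approach buys is that it makes the lemma independent of the machinery of the $W^2_{s,\delta}$ spaces used for Lemmas \ref{wc0} and \ref{oneto}, whereas the citation route keeps all three decay lemmas under one reference.
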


\begin{lem}
Let $(\xi,\zeta)$ be a solution pair of the system obtained as a minimizer of (\ref{min}).  Then $\xi$ and $\zeta$ both approach some constants at infinity.
\end{lem}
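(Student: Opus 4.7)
The plan is to apply McOwen's weighted Sobolev framework to each of the two equations in (\ref{main}) separately, realizing $\xi$ (and then $\zeta$) as a constant plus a function in $C_0(\mathbb{R}^2)$. I will write out the argument for $\xi$; the one for $\zeta$ is parallel.

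Write the first equation in (\ref{main}) as $\Delta\xi = F_\xi$ with $F_\xi = (2\abs{K}/k_{11})Ve^{(\xi-\zeta)/2}+f$, and the second as $\Delta\zeta = F_\zeta$. The constraints (\ref{constraints}) and (\ref{constraints1}) are engineered precisely so that $\int_{\mathbb{R}^2} F_\xi\, dx = 2\pi\alpha-2\pi\alpha = 0$ and $\int_{\mathbb{R}^2} F_\zeta\, dx = 0$. So to invoke Lemma \ref{oneto}, the remaining task is to fix some $\delta\in(-1,0)$ and show that $F_\xi\in W^2_{0,\delta+2}$, that is, $(1+\abs{x})^{\delta+2}F_\xi\in L^2(dx)$.

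The term $(1+\abs{x})^{\delta+2}f$ is in $L^2$ since $f$ is compactly supported. For the nonlinear piece, the key observation is that $V$ has Gaussian decay from the factor $e^{-\abs{x}^2}$ together with at most polynomial growth coming from $u_3-v_3$ and from the vortex factors in $V_0$. Decomposing $\xi=\bar\xi+\xi'$, $\zeta=\bar\zeta+\zeta'$ with $\xi',\zeta'\in\tilde{\mathscr{H}}$, one has
\begin{equation*}
\int\displaylimits_{\mathbb{R}^2}(1+\abs{x})^{2(\delta+2)}V^2 e^{\xi-\zeta}\, dx\leq C\int\displaylimits_{\mathbb{R}^2} e^{-\abs{x}^2}e^{\abs{\xi'}+\abs{\zeta'}}\, dx.
\end{equation*}
Because $\kappa>4$, the density $e^{-\abs{x}^2}$ is dominated by $h_0(x)=\abs{x}^{-\kappa}$ at infinity, so this last integral is controlled by $\int e^{\abs{\xi'}+\abs{\zeta'}}\, d\mu$, which is finite by H\"older's inequality and the Trudinger--Moser bound of Lemma \ref{tm}. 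Hence $F_\xi\in W^2_{0,\delta+2}$, and an entirely analogous bound using the Gaussian decay of $U$ places $F_\zeta$ there too.

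Lemma \ref{oneto} now produces a unique $\tilde\xi\in W^2_{2,\delta}$ with $\Delta\tilde\xi=F_\xi$. By Lemma \ref{wc0} (with $s=2>1$ and $\delta>-1$), $\tilde\xi\in C_0(\mathbb{R}^2)$, so $\tilde\xi(x)\to0$ as $\abs{x}\to\infty$. Moreover $\tilde\xi\in\mathscr{H}$: the weight $(1+\abs{x})^{\delta+1}$ on $\nabla\tilde\xi$ has positive exponent, giving $\nabla\tilde\xi\in L^2(dx)$, while $\tilde\xi$ bounded together with $\int d\mu<\infty$ yields $\tilde\xi\in L^2(d\mu)$. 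Consequently $\xi-\tilde\xi\in\mathscr{H}$ and $\Delta(\xi-\tilde\xi)=0$, so by Lemma \ref{constlem} there is a constant $c_\xi$ with $\xi=c_\xi+\tilde\xi$. This gives $\xi(x)\to c_\xi$ at infinity. Repeating the argument with $F_\zeta$ yields $\zeta(x)\to c_\zeta$ at infinity.

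The main obstacle is the weighted $L^2$ estimate on $F_\xi$ and $F_\zeta$: one has to combine the super-polynomial decay of $U$ and $V$ with Trudinger--Moser control on the exponentials of $\xi',\zeta'$ to absorb the polynomial weight $(1+\abs{x})^{\delta+2}$. Once that estimate is secured, the rest is a straightforward application of McOwen's three lemmas.
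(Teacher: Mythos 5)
Your proposal is correct and follows essentially the same route as the paper: setting $F_\xi,F_\zeta$ (the paper's $f_1,h_1$), using the constraints to get zero mean, establishing membership in $W^2_{0,\delta+2}$ via the decay of $U,V$ together with the Trudinger--Moser bound, and then applying Lemmas \ref{oneto}, \ref{wc0}, and \ref{constlem} in exactly the paper's order. The only cosmetic difference is that you exploit the full Gaussian decay $e^{-\abs{x}^2}$ of $U$ and $V$, whereas the paper only invokes the weaker bound $O(e^{-\abs{x}})$; both suffice to absorb the polynomial weight.
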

\begin{proof}
First, we recall the Trudinger-Moser type inequality \cite{mcowen1984equation},
\begin{equation*}
\int\displaylimits_{\mathbb{R}^2}e^{a\abs{u}}d\mu\leq C(\varepsilon)e^{\frac{a^2}{4(4\pi-\varepsilon)}\| \nabla u\|_{L^2(dx)}^2},\qquad u\in\tilde{\mathscr{H}}
\end{equation*}
and we look at the right hand side of (\ref{main}) and let 
\begin{equation}
f_1=\frac{2\abs{K}}{k_{11}}Ve^{\frac{1}{2}(\xi-\zeta)}+f\qquad\text{and}\qquad
h_1 =-2k_{12}Ue^{-\frac{k_{11}}{2k_{12}}\zeta}-2\frac{k_{12}^2}{k_{11}}Ve^{\frac{1}{2}(\xi-\zeta)}+h.
\end{equation}
Since we have the constraints (\ref{constraints}), we see that $f_1,h_1\in L^2(dx)$ with,
\begin{equation}
\int\displaylimits_{\mathbb{R}^2}f_1dx=\int\displaylimits_{\mathbb{R}^2}h_1dx=0.
\end{equation}

Now we recall that $\bar{\xi}$ and $\bar{\zeta}$ are bounded in $\mathbb{R}$.  Also, the functions $\nabla \xi'$ and $\nabla\zeta'$ are bounded in $L^2(dx)$.  With this in mind, we now claim that $f_1,h_1\in W_{0,\delta+2}^2$ for $-1<\delta<0$. To see this, we first observe that for this choice of $s$ and $\delta$, we have the following norm,
\begin{equation*}
\| u\|_{W_{0,\delta}^2}^2=\|(1+\abs{x})^{\delta+2}u\|_{L^2(dx)}^2.
\end{equation*}
Then, for $f_1$ we use the triangle inequality to get the following,
\begin{align}
\nonumber\|f_1\|_{W_{0,\delta+2}^2}^2&=\int\displaylimits_{\mathbb{R}^2}\abs{(1+\abs{x})^{\delta+2}f_1}^2dx\\
\nonumber&=\int\displaylimits_{\mathbb{R}^2}(1+\abs{x})^{2\delta+4}\abs{\frac{2\abs{K}}{k_{11}}Ve^{\frac{1}{2}(\xi-\zeta)}+f}^2dx\\
&\leq \int\displaylimits_{\mathbb{R}^2}(1+\abs{x})^{2\delta+4}\left(\frac{4\abs{K}^2}{k_{11}^2}V^2e^{\xi-\zeta}+\abs{\frac{4\abs{K}}{k_{11}}}\abs{f}Ve^{\frac{1}{2}(\xi-\zeta)}+\abs{f}^2 \right)dx.
\end{align}

Recall that $V=O(e^{-\abs{x}})$ and $f$ has compact support and satisfies $\int fdx=-2\pi\alpha$.  As a result, we see that the last term in on the right hand side is bounded.  We must then find bounds for the first two terms.  To this end, we use the decomposition $\xi=\bar{\xi}+\xi'$ and $\zeta=\bar{\zeta}+\zeta'$.  We observe that for the first term, $V^2$ controls  $(1+\abs{x})^{2\delta+4}$, and use the Trudinger-Moser type inequality to obtain,
\begin{align}
\int\displaylimits_{\mathbb{R}^2}\Bigg(\frac{4\abs{K}^2}{k_{11}^2}(1+\abs{x})^{2\delta+4}V^2e^{\xi-\zeta}\Bigg)dx&\leq C_1e^{\bar{\xi}-\bar{\zeta}}\int\displaylimits_{\mathbb{R}^2}e^{\xi'-\zeta'}d\mu\nonumber\\
&\leq C'(\varepsilon)e^{\frac{1}{4(4\pi-\varepsilon)}\left(\|\nabla \xi'\|_{L^2(dx)}^2+\|\nabla\zeta'\|_{L^2(dx)}^2\right)}\nonumber\\
&\leq C''.
\end{align}
Now we must find an upper bound for the second term.  We use Young's inequality with $\varepsilon$ to get the following,
\begin{align}
\nonumber\int\displaylimits_{\mathbb{R}^2}(1+\abs{x})^{2\delta+4}\abs{\frac{4\abs{K}}{k_{11}}}\abs{f}Ve^{\frac{1}{2}(\xi-\zeta)}dx&=\abs{\frac{4\abs{K}}{k_{11}}}\int\displaylimits_{\mathbb{R}^2}\Bigg((1+\abs{x})^{2\delta+4}\frac{\abs{f}}{\sqrt{2\varepsilon h_0}}Ve^{\frac{1}{2}(\xi-\zeta)}\sqrt{2\varepsilon h_0}\Bigg)dx\\
\nonumber&\leq C_2\int\displaylimits_{\mathbb{R}^2}\frac{(1+\abs{x})^{4\delta+8}}{4\varepsilon h_0}\abs{f}^2dx+C_2\varepsilon\int\displaylimits_{\mathbb{R}^2}Ve^{\xi-\zeta}d\mu\\
&\leq C_2'\varepsilon^{-1}+C''.
\end{align}
This establishes the fact that $f_1\in W_{0,\delta+2}^2$.  We now turn to $h_1$.  For this, we recall that $h$ has compact support with $\int hdx=2\pi\beta$ and $U=O(e^{-\abs{x}})$. We observe that,
\begin{align}
\nonumber\|h_1\|_{W_{0,\delta+2}^2}^2&=\|2k_{12}(1+\abs{x})^{\delta+2}Ue^{-\frac{k_{11}}{2k_{12}}\zeta}+2\frac{k_{12}^2}{k_{11}}(1+\abs{x})^{\delta+2}Ve^{\frac{1}{2}(\xi-\zeta)}-(1+\abs{x})^{\delta+2}h\|_{L^2(dx)}^2\\\nonumber
&\leq C_3\|(1+\abs{x})^{\delta+2}Ue^{-\frac{k_{11}}{2k_{12}}\zeta}\|_{L^2(dx)}^2\\&\quad+C_4\|(1+\abs{x})^{\delta+2}Ve^{\frac{1}{2}(\xi-\zeta)}\|_{L^2(dx)}^2+\|(1+\abs{x})^{\delta+2}h\|_{L^2(dx)}^2
\end{align}
The boundedness of the last two terms follows from the bound for $f_1$.  To establish a bound for the first term, we use the Trudinger-Moser inequality.  We have established our claim.  

We now use Lemma \ref{oneto} to obtain that there are unique $\xi_1,\zeta_1\in W_{2,\delta}^2$ so that 
\begin{equation*}
\Delta \xi_1=f_1\qquad\text{and}\qquad \Delta\zeta_1=h_1.
\end{equation*}
Now, by Lemma \ref{wc0}, we see that as $\abs{x}\rightarrow\infty$, then $\xi_1,\zeta_1\rightarrow 0$.
Moreover,  $\xi_1,\zeta_1\in L^2(d\mu)$. Also, since $\nabla\xi_1,\nabla\zeta_1\in W_{0,\delta+1}^2$ and $\delta>-1$  we see that
 \begin{equation}\label{w0}
 \|\nabla\xi_1\|_{W_{0,\delta+1}}^2=\|(1+\abs{x})^{\delta+1}\nabla\xi_1\|_{L^2(dx)}^2.
 \end{equation}
 And since $\delta>-1$, we see that 
 \begin{equation}\label{delt}
 (1+\abs{x})^{\delta+1}>1.
 \end{equation}
 When we combine the right hand side of (\ref{w0}) and (\ref{delt}), we see that $\nabla\xi_1\in L^2(dx)$.  Similarly, we find that $\nabla\zeta_1\in L^2(dx)$.  Therefore, we have obtained that $\xi_1,\zeta_1\in\mathscr{H}$.  We now use Lemma \ref{constlem} to conclude that $\xi-\xi_1$ and $\zeta-\zeta_1$ are constants.  From this, we conclude that $\xi$, and $\zeta$ approach constants at infinity.

 Moreover, since $u_0,v_0\rightarrow0$ as $\abs{x}\rightarrow\infty$, $e^{u_3}=\abs{x}^{\alpha_0}$, and $e^{v_3}=\abs{x}^{-\beta_0}$, we see that
\begin{equation}
\abs{\psi_{\uparrow}}^2,\abs{\psi_{\downarrow}}^2\rightarrow0\qquad\text{as}\qquad\abs{x}\rightarrow\infty.
\end{equation}
This confirms that the functions $\psi_{\uparrow}$ and $\psi_{\downarrow}$ do not satisfy the finite energy conditions imposed in \cite{medina,ichi2}.  Furthermore, the end behavior of these solutions classifies them as non-topological.
\end{proof}

\section{Results}\label{discuss}
In this paper, we studied the solutions to the system given by (\ref{main}) when the coupling matrix, $K$ is positive definite.  In \cite{medina}, the same system was studied and Medina was able to establish existence of topological solutions.  That is, these solutions were of finite energy.  In the present, we have established existence of non-topological solutions to (\ref{main}) and ultimately solutions to (\ref{vortex1}) that are of divergent energy.  What allowed us to consider such solutions is changing the function space from a standard Sobolev space to a weighted one by using a power weight on the $L^2$ norm.  Thus, we allowed for a different class of solutions.  These solutions would have an infinite norm in $W^{1,2}$, but would still be solutions to the system.  Under the appropriate power weight, we control the end behavior, and end up with solutions having a finite weighted norm.

Since the matrix $K$ is in terms of $p$ and $q$, and the filling factor, $\nu$ is in terms of $p$, we summarize the results of the paper in terms of the quantities $p$ and $q$ as well as the constraint parameters $\alpha$ and $\beta$ given by (\ref{constraints}). 

\begin{enumerate}
\item When $k_{12}<0$, the parameters of the matrix satisfy $q>p>0$.  In Theorem \ref{k12min}, we obtained the  necessary conditions for non-topological solutions, which decay exponentially to zero, to occur.  These necessary conditions are given by
\begin{equation}
0<\beta<\frac{\alpha}{4}\left(\frac{p}{q}+\frac{q}{p}-2\right).
\end{equation}

\item In Theorem \ref{51thm}, when $p>q>0$, the necessary conditions for non-topological solutions decaying to zero at infinity are given by
\begin{equation}
\alpha>0\qquad\text{and}\qquad \beta>\frac{\alpha}{4}\left(\frac{p}{q}+\frac{q}{p}-2\right).
\end{equation}

\item When $p>q>0$, and $\beta>\frac{\alpha}{4}\left(\frac{p}{q}+\frac{q}{p}-2\right)$, we obtained the necessary condition for saddle point type solutions in Theorem \ref{mountainthm}.  The necessary condition is given by $0<\alpha <\frac{32qp}{p^2-q^2}$.
\end{enumerate}

We now consider a special case, when the filling factor satisfies the much studied $\nu=5/2$ yielding $p=2\pi/5$ \cite{pfaf1,pfaf4}.  For this filling factor, we observe that there are two ranges for $q$.  When $k_{12}<0$, or $p<q$, we have $q>2\pi/5$ and then $\alpha$ and $\beta$ must satisfy 
\begin{equation}
0<\beta<\frac{\alpha}{4}\left(\frac{2\pi}{5q}+\frac{5 q}{2\pi}-2\right).
\end{equation}
As an example we take the pair $p=2\pi/5$ and $q={3\pi}/{5}$.  This yields the following conditions on $\alpha$ and $\beta$, $0<24\beta<\alpha,$ corresponding to the the following coupling matrix,
\begin{equation*}
K=\frac{1}{2}\begin{pmatrix}
5&-1\\-1&5
\end{pmatrix}.
\end{equation*}

If $0<k_{12}<1$ and $\nu=5/2$, then $0<q<2\pi/5$.  We observe that if we take $q=\pi/5$, we obtain the following restriction on the constraints $8\beta>\alpha>0$ corresponding to the following coupling matrix,
\begin{equation*}
K=\frac{1}{2}\begin{pmatrix}
3&1\\1&3
\end{pmatrix}.
\end{equation*}

An interesting question to ask would be if the coupling matrix were indefinite, that is $\abs{K}<0$, could such nontopological solutions be found?  That is, can one seek to establish existence of solutions in a weighted Sobolev space when $\abs{K}<0$.  In the present, we were unable to establish the coercivity of the energy functional for such $K$.  Another direction of further study would be for an arbitrary matrix, $K$.  In this scenario, the relationship between the system and the FQHE will be broken, but the problem will be interesting from a pure mathematical point of view.



\bibliography{infiniteenergy}
\bibliographystyle{abbrv}


%
%
%
\end{document}